
\documentclass[11pt, a4paper, twoside]{amsart}
\usepackage[centering, totalwidth = 370pt, totalheight = 630pt]{geometry}
\usepackage{amssymb, amsmath, amsthm, enumerate, microtype, stmaryrd, url, mathrsfs}
\usepackage[nobreak]{cite}
\usepackage[latin1]{inputenc}
\usepackage[dvips, arrow, matrix, tips, curve]{xy}
\usepackage{mathpartir}
\SelectTips{cm}{10}


\newcommand{\cat}[1]{\mathbf{#1}}

\newcommand{\op}{\mathrm{op}}

\newcommand{\thg}{{\mathord{\text{--}}}}

\renewcommand{\ll}{\gamma}
\newcommand{\rr}{\rho}
\newcommand{\lbl}{\ell}

\newcommand{\abs}[1]{{\left|{#1}\right|}}

\newcommand{\defeq}{\mathrel{\mathop:}=}

\newcommand{\cd}[2][]{\vcenter{\hbox{\xymatrix#1{#2}}}}


\newcommand{\A}{{\mathscr A}}

\newcommand{\E}{{\mathscr E}}



\newcommand{\xtor}[1]{\cdl[@1]{{} \ar[r]|-{\object@{|}}^{#1} & {}}}

\makeatletter

\def\hookleftarrowfill@{\arrowfill@\leftarrow\relbar{\relbar\joinrel\rhook}}
\def\twoheadleftarrowfill@{\arrowfill@\twoheadleftarrow\relbar\relbar}
\def\leftbararrowfill@{\arrowdoublefill@{\leftarrow\mkern-5mu}\relbar\mapstochar\relbar\relbar}
\def\Leftbararrowfill@{\arrowdoublefill@{\Leftarrow\mkern-2mu}\Relbar\Mapstochar\Relbar\Relbar}
\def\leftringarrowfill@{\arrowdoublefill@{\leftarrow\mkern-3mu}\relbar{\mkern-3mu\circ\mkern-2mu}\relbar\relbar}
\def\lefttriarrowfill@{\arrowfill@{\mathrel\triangleleft\mkern0.5mu\joinrel\relbar}\relbar\relbar}
\def\Lefttriarrowfill@{\arrowfill@{\mathrel\triangleleft\mkern1mu\joinrel\Relbar}\Relbar\Relbar}

\def\hookrightarrowfill@{\arrowfill@{\lhook\joinrel\relbar}\relbar\rightarrow}
\def\twoheadrightarrowfill@{\arrowfill@\relbar\relbar\twoheadrightarrow}
\def\rightbararrowfill@{\arrowdoublefill@{\relbar\mkern-0.5mu}\relbar\mapstochar\relbar\rightarrow}
\def\Rightbararrowfill@{\arrowdoublefill@{\Relbar\mkern-2mu}\Relbar\Mapstochar\Relbar\Rightarrow}
\def\rightringarrowfill@{\arrowdoublefill@\relbar\relbar{\mkern-2mu\circ\mkern-3mu}\relbar{\mkern-3mu\rightarrow}}
\def\righttriarrowfill@{\arrowfill@\relbar\relbar{\relbar\joinrel\mkern0.5mu\mathrel\triangleright}}
\def\Righttriarrowfill@{\arrowfill@\Relbar\Relbar{\Relbar\joinrel\mkern1mu\mathrel\triangleright}}

\def\leftrightarrowfill@{\arrowfill@\leftarrow\relbar\rightarrow}
\def\mapstofill@{\arrowfill@{\mapstochar\relbar}\relbar\rightarrow}

\renewcommand*\xleftarrow[2][]{\ext@arrow 20{20}0\leftarrowfill@{#1}{#2}}
\providecommand*\xLeftarrow[2][]{\ext@arrow 60{22}0{\Leftarrowfill@}{#1}{#2}}
\providecommand*\xhookleftarrow[2][]{\ext@arrow 10{20}0\hookleftarrowfill@{#1}{#2}}
\providecommand*\xtwoheadleftarrow[2][]{\ext@arrow 60{20}0\twoheadleftarrowfill@{#1}{#2}}
\providecommand*\xleftbararrow[2][]{\ext@arrow 10{22}0\leftbararrowfill@{#1}{#2}}
\providecommand*\xLeftbararrow[2][]{\ext@arrow 50{24}0\Leftbararrowfill@{#1}{#2}}
\providecommand*\xleftringarrow[2][]{\ext@arrow 10{26}0\leftringarrowfill@{#1}{#2}}
\providecommand*\xlefttriarrow[2][]{\ext@arrow 80{24}0\lefttriarrowfill@{#1}{#2}}
\providecommand*\xLefttriarrow[2][]{\ext@arrow 80{24}0\Lefttriarrowfill@{#1}{#2}}

\renewcommand*\xrightarrow[2][]{\ext@arrow 01{20}0\rightarrowfill@{#1}{#2}}
\providecommand*\xRightarrow[2][]{\ext@arrow 04{22}0{\Rightarrowfill@}{#1}{#2}}
\providecommand*\xhookrightarrow[2][]{\ext@arrow 00{20}0\hookrightarrowfill@{#1}{#2}}
\providecommand*\xtwoheadrightarrow[2][]{\ext@arrow 03{20}0\twoheadrightarrowfill@{#1}{#2}}
\providecommand*\xrightbararrow[2][]{\ext@arrow 01{22}0\rightbararrowfill@{#1}{#2}}
\providecommand*\xRightbararrow[2][]{\ext@arrow 04{24}0\Rightbararrowfill@{#1}{#2}}
\providecommand*\xrightringarrow[2][]{\ext@arrow 01{26}0\rightringarrowfill@{#1}{#2}}
\providecommand*\xrighttriarrow[2][]{\ext@arrow 07{24}0\righttriarrowfill@{#1}{#2}}
\providecommand*\xRighttriarrow[2][]{\ext@arrow 07{24}0\Righttriarrowfill@{#1}{#2}}

\providecommand*\xmapsto[2][]{\ext@arrow 01{20}0\mapstofill@{#1}{#2}}
\providecommand*\xleftrightarrow[2][]{\ext@arrow 10{22}0\leftrightarrowfill@{#1}{#2}}
\providecommand*\xLeftrightarrow[2][]{\ext@arrow 10{27}0{\Leftrightarrowfill@}{#1}{#2}}

\makeatother


\newcommand{\twocong}[2][0.5]{\ar@{}[#2] \save ?(#1)*{\cong}\restore}
\newcommand{\twoeq}[2][0.5]{\ar@{}[#2] \save ?(#1)*{=}\restore}
\newcommand{\rtwocell}[3][0.5]{\ar@{}[#2] \ar@{=>}?(#1)+/l 0.2cm/;?(#1)+/r 0.2cm/^{#3}}
\newcommand{\ltwocell}[3][0.5]{\ar@{}[#2] \ar@{=>}?(#1)+/r 0.2cm/;?(#1)+/l 0.2cm/^{#3}}
\newcommand{\ltwocello}[3][0.5]{\ar@{}[#2] \ar@{=>}?(#1)+/r 0.2cm/;?(#1)+/l 0.2cm/_{#3}}
\newcommand{\dtwocell}[3][0.5]{\ar@{}[#2] \ar@{=>}?(#1)+/u  0.2cm/;?(#1)+/d 0.2cm/^{#3}}
\newcommand{\dltwocell}[3][0.5]{\ar@{}[#2] \ar@{=>}?(#1)+/ur  0.2cm/;?(#1)+/dl 0.2cm/^{#3}}
\newcommand{\drtwocell}[3][0.5]{\ar@{}[#2] \ar@{=>}?(#1)+/ul  0.2cm/;?(#1)+/dr 0.2cm/^{#3}}
\newcommand{\dthreecell}[3][0.5]{\ar@{}[#2] \ar@3{->}?(#1)+/u  0.2cm/;?(#1)+/d 0.2cm/^{#3}}
\newcommand{\utwocell}[3][0.5]{\ar@{}[#2] \ar@{=>}?(#1)+/d 0.2cm/;?(#1)+/u 0.2cm/_{#3}}
\newcommand{\dtwocelltarg}[3][0.5]{\ar@{}#2 \ar@{=>}?(#1)+/u  0.2cm/;?(#1)+/d 0.2cm/^{#3}}
\newcommand{\utwocelltarg}[3][0.5]{\ar@{}#2 \ar@{=>}?(#1)+/d  0.2cm/;?(#1)+/u 0.2cm/_{#3}}

\newdir{(}{{}*!<0em,-.14em>-\cir<.14em>{l^r}}
\newdir{ (}{{}*!/-5pt/\dir{(}}
\newdir{ >}{{}*!/-5pt/\dir{>}}


\swapnumbers
\theoremstyle{plain}
\newtheorem{Thm}{Theorem}[section]
\newtheorem{Prop}[Thm]{Proposition}
\newtheorem{Cor}[Thm]{Corollary}

\theoremstyle{definition}
\newtheorem{Defn}[Thm]{Definition}

\theoremstyle{remark}
\newtheorem{Ex}[Thm]{Example}

\newcommand{\too}{\mathbf 2}
\begin{document}
 \leftmargini=2em
\title[An abstract view on syntax with sharing]{An abstract view on\\syntax with sharing}
\author{Richard Garner}
\address{Department of Computing, Macquarie University, NSW 2109, Australia}
\email{richard.garner@mq.edu.au} \subjclass[2010]{}
\date{\today}
\begin{abstract}
The notion of term graph encodes a refinement of inductively generated syntax
in which regard is paid to the the sharing and discard of subterms. Inductively
generated syntax has an abstract expression in terms of \emph{initial algebras}
for certain endofunctors on the category of sets, which permits one to go
beyond the set-based case, and speak of inductively generated syntax in other
settings. In this paper we give a similar abstract expression to the notion of
term graph. Aspects of the concrete theory are redeveloped in this setting, and
applications beyond the realm of sets discussed.
\end{abstract}
 \maketitle
\section{Introduction}
A fundamental construction in both mathematics and computer science is the one
which to a signature of finitary operations $\Sigma$ and a set $A$ assigns the
collection $T_\Sigma(A)$ of terms over the signature with free variables in the
set. It provides both the raw syntax out of which semantic structures of all
kinds are constructed, and also the induction principle by which reasoning
about such structures may proceed. In most situations, the passage from the
syntax to the semantics is one of collapse, in which semantically
interchangeable elements of the sets $T_\Sigma(A)$ are identified under a
suitable equivalence relation. However, for some applications we wish to move
in the opposite direction, viewing the sets $T_\Sigma(A)$ themselves as the
extensional collapse of a more intensional structure in which different traces
of execution of the same term can be differentiated from each other. The
concern of this paper is with the particular form of this differentiation in
which regard is paid to the \emph{sharing} and \emph{discard} of computational
values. By this we mean the following: that for a syntactic expression such as
$(\alpha + \beta) * (\alpha + \beta)$, we wish to distinguish the evaluation
path which computes $(\alpha + \beta)$ just once and multiplies the result by
itself, from that which computes it twice, and multiplies the two results
together, from all of those which first go away and performs some irrelevant
computation before beginning the task at hand. Let us for the moment refer to
any means of encoding such distinctions as a \emph{syntax with sharing}. Such
syntaxes have obvious applications to program optimisation; but beyond this,
are important in settings where the execution of functions can incur
\emph{side-effects}---changes in the state of an external environment---which
might render the result of a computation dependent on its execution path.

Here we shall be concerned with a well-established syntax with sharing based
around the notion of \emph{acyclic term
graph}~\cite{Barendregt1987Term,Hasegawa1997Models,Sleep1993Term,Plump1999Term}.
Term graphs generalise the familiar representation of elements of $T_\Sigma(A)$
as well-founded trees, by dropping the requirement that the children of a
particular node be distinct. This is perhaps most easily appreciated through an
example. The term $(\alpha+\beta)*(\alpha+\beta)$ considered above is
represented by the following well-founded tree:
\begin{equation*}
    \cd[@-1em]{
      \alpha \ar[dr] & & \beta \ar[dl] & & \alpha \ar[dr] & & \beta\rlap{ .}  \ar[dl] \\
      & + \ar[drr] & & & & + \ar[dll] \\
      & & & {*} &
}
\end{equation*}
On the other hand, it is represented by any one of a number of different term
graphs, each expressing a computation path with a differing degree of sharing.
One of these term graphs is the same well-founded tree that was just displayed;
and this corresponds to the execution path which computes $(\alpha+\beta)$
twice. On the other hand, the path which computes $(\alpha+\beta)$ only once is
represented by a term graph
\begin{equation*}
    \cd[@-0.8em]{
      \alpha \ar[dr] & & \beta \ar[dl] \\
      & + \ar@/_8pt/[d] \ar@/^8pt/[d]\\
      & {*}
}
\end{equation*}
whilst the path which does this after first carrying out an irrelevant
computation of $(\alpha + \alpha)$ is represented by
\begin{equation*}
    \cd[@-0.8em]{
      \alpha \ar[dr] \ar@/_8pt/[d] \ar@/^8pt/[d] & & \beta\rlap{ .} \ar[dl] \\
      + & + \ar@/_8pt/[d] \ar@/^8pt/[d]\\
      & {*}
}
\end{equation*}

Now, the construction with which we began this Introduction---that which to a
signature $\Sigma$ and a set $A$ assigns the set of terms $T_\Sigma(A)$---has a
well-known abstract characterisation, achieved by shifting the focus of our
attention away from the signature $\Sigma$, and towards the corresponding
\emph{signature endofunctor}
\begin{equation}\label{eq:signature}\begin{aligned}
    F_\Sigma \colon \cat{Set} & \to \cat{Set}\\
    X & \mapsto \textstyle \sum_{\sigma \in\Sigma} X^{\abs{\sigma}}\ \text,
    \end{aligned}
\end{equation}
wherein we write $\abs{\thg} \colon \Sigma \to \mathbb N$ for the function
assigning arities to each element of the signature. At this level of
generality, the set $T_\Sigma(A)$ may be characterised as an \emph{initial
algebra} for the endofunctor $X \mapsto A + F_\Sigma(X)$ (in a sense recalled
in Definition~\ref{def:initialalg} below). This abstract characterisation of
$T_\Sigma(X)$ also captures its essential structural features---such as the
inductive reasoning it supports---which justifies our interpreting, for an
arbitrary endofunctor $F$ of an arbitrary category $\E$, an initial algebra for
$X \mapsto A + FX$ as being an ``object of terms over $F$ with free variables
in $A$.''

The objective of this paper is to describe a similar abstraction of the notion
of term graph.
We will describe a construction on an endofunctor $F \colon \E \to \E$, which
when applied to a signature endofunctor $F_\Sigma$ on the category of sets,
yields precisely the notion of term graph over $\Sigma$. Moving beyond this
situation allows us to recapture other kinds of term graph; thus taking $\E =
\cat{Set}^S$ for some set $S$, we obtain \emph{many-sorted} term graphs; taking
$\E = \cat{Set}^{\mathbb F}$ (where $\mathbb F$ is the category of finite sets
and bijections) allows us to describe term graphs over second-order syntax;
whilst dualising our construction (in a sense to be made precise later) yields
\emph{cyclic} term graphs, allowing one to capture recursive computations. Of
course, we may also leave the sphere of sets entirely, taking $\E$ to be a
category of domains, or of complete metric spaces, or a topos, or the category
of categories\dots\ The point is that we have a uniform construction providing
us, in each context, with a workable notion of term graph.

We introduce our abstract notion of term graph in Section~\ref{sec:definition}.
From any suitable endofunctor $F$ of a suitable category $\E$, we will
construct a comonad $L_F$ on the arrow category $\E^\too$, whose coalgebras we
define to be abstract term graphs over the endofunctor $F$. We informally
justify our definition by giving a worked example in the category of sets; and
in Section~\ref{sec:equivalence}, make this informal justification precise by
proving that our abstract notion of term graph agrees with the established one
for \emph{any} signature endofunctor of the form~\eqref{eq:signature}.
Section~\ref{sec:generalisations} then considers what the notion of abstract
term graph gives us when we move beyond the motivating set-based case, whilst
Section~\ref{sec:cyclic} shows how \emph{cyclic} term graphs may be captured by
dualising our construction in a particular manner. Finally,
Sections~\ref{sec:interpretation} and~\ref{sec:composition} describe how
further useful aspects of the set-based theory of term graphs may be recaptured
in our abstract setting. In Section~\ref{sec:interpretation}, we show how an
abstract term graph may be interpreted in a suitable semantic domain, whilst in
Section~\ref{sec:composition} we see how abstract term graphs may be composed
into each other.

It is perhaps worth saying a few words about how our abstract treatment of term
graphs is related to others in the literature. One particularly elegant
approach is that described by Hasegawa in his
Ph.~D.~thesis~\cite{Hasegawa1997Models}; with an essentially equivalent one
being given by Corradini and Gadducci
in~\cite{Corradini1997A-2-categorical,Corradini1999An-algebraic}. The key idea there
is to associate with each signature $\Sigma$ a \emph{classifying category}
$\mathscr S[\Sigma]$, whose objects are the natural numbers, and whose
morphisms $n \to m$ are term graphs over $\Sigma$ with $n$ free variables and
$m$ marked output nodes. The structure borne by this category encodes the
various operations on term graphs: with more elaborate kinds of term graph
giving rise to more elaborate kinds of structure on the classifying category.
This approach generalises the notion of term graph in a different direction
from ours; whilst still being tied to the category of sets, it allows one to
impose equations on top of the raw theory of terms. By contrast, our approach
allows one to move beyond the category of sets, but is, as yet, restricted to
freely generated sharing syntaxes. We will see how the two approaches may be
reconciled in Section~\ref{sec:composition}.

A different abstract characterisation of term graphs is described by Hamana
in~\cite{Hamana2010Initial}. In broad strokes the idea is to exploit the linear
representation of term graphs using \textsf{let} syntax; which would, for
example, represent the three term graphs displayed above as
\begin{align*}
    &(\alpha + \beta) * (\alpha + \beta)\text,  \\
    & \mathsf{let}\ z := (\alpha + \beta)\ \mathsf{in}\ z * z \\
    \text{and} \quad &\mathsf{let}\ w := (\alpha + \alpha)\ \mathsf{in}\ \mathsf{let}\ z := (\alpha + \beta)\ \mathsf{in}\ z * z\ \text.
\end{align*}
One may exploit this to give a representation of term graphs based on the
categorical higher-order syntax introduced by Fiore, Plotkin and Turi
in~\cite{Fiore1999Abstract}. This is not precisely what Hamana does, since he
wishes to give an syntax in which inductively defined elements denote term
graphs \emph{uniquely}---something that is not the case for the $\mathsf{let}$
notation; but the general idea should be clear enough. Once again, this
approach differs from ours in being still tied to the category of sets; on the
other hand, it gives a representation of term graphs which is more suitable for
direct implementation.

A third abstract treatment of term graphs, and the one closest in spirit to the
present work, is given in~\cite{Ghani2005Monads}. The idea is to associate to
any endofunctor $F$ a monad $S_F$ for which $S_\Sigma(A)$ is the set of all
term graphs with free variables from $A$ equipped with a marked node
(specifying the output of the computation). However, whilst in principle this
approach allows one to move beyond endofunctors of the category of sets, as the
authors of~\cite{Ghani2005Monads} note in their Section~4.2, there are aspects
of their development that rely on the use of elements, and so cannot be
uniformly generalised beyond the set-based situation.

\textbf{Acknowledgements}. Thanks to Marcelo Fiore, Peter Hancock, Masahito
Hasegawa, Martin Hyland, Alexander Kurz, Paul Blain Levy and Wouter Swierstra
for useful discussions concerning the material of this paper. I acknowledge the
support of St.~John's College, Cambridge, and the Centre for Australian
Category Theory.

\section{Abstract term graphs}\label{sec:definition}
This section describes the construction which underlies our abstract
characterisation of term graphs. Given a suitable endofunctor $F$ on a suitable
category $\E$, it yields a comonad $L_F$ on the arrow category $\E^\too$ whose
coalgebras we shall define to be abstract term graphs over the signature $F$.
We justify this definition in the next section, where we show that this is
literally what they are in the case where $F$ is the endofunctor on the
category of sets associated to a signature $\Sigma$.
First we recall the key abstract notion required for our construction. 

\begin{Defn}\label{def:initialalg}
For any category $\E$ and endofunctor $F \colon \E \to \E$, the category
$F\text-\cat{Alg}$ of \emph{$F$-algebras} has as objects, pairs $(X \in \E, x
\colon FX \to X)$, and as maps $(X, x) \to (Y, y)$, those morphisms $f \colon X
\to Y$ of $\E$ for which $y.Ff = f.x$. An \emph{initial algebra} for $F$ is an
initial object of $F$-$\cat{Alg}$. We denote the underlying object in $\E$ of
such an initial algebra by $\mu X. FX$.
\end{Defn}

Before giving the construction, we motivate it with an example.

\begin{Ex}\label{ex:workedex}
Let $\Sigma$ be the signature $\{\, \alpha, \beta, \mathord +, \mathord *\,\}$
in which $+$ and $*$ are binary operations and $\alpha$ and $\beta$ are
constants, and let $F_\Sigma(X) = X^2 + X^2 + 1+ 1$ be the corresponding
endofunctor on the category of sets. This endofunctor has an initial algebra
$\mu X. F_\Sigma X$ given by the set of closed terms over $\Sigma$. As in the
Introduction, we may represent the elements of this set by well-founded trees,
each of whose nodes is labelled with an element $\sigma \in \Sigma$, and where
each such node has $\abs{\sigma}$ children. We begin by showing how to
generalise this description of closed \emph{terms} to one of closed \emph{term
graphs}, given in terms of the coalgebras for a certain comonad $Q$ on
$\cat{Set}$.

The action on objects of this comonad will be given by $QB = \mu X. B \times
F_\Sigma X$. Thus $QB$ is the set whose elements are well-founded trees over
$\Sigma$ in which every node has also been labelled with an element of $B$. The
action of $Q$ on a function $f \colon B \to B'$ is evident: given a tree $t \in
QB$, we replace the label $b \in B$ at each node by $f(b) \in B'$ to obtain an
element $(Qf)(t) \in QB'$. The counit map $\epsilon_B \colon QB \to B$ is
equally straightforward: for each tree $t \in QB$, we take $\epsilon_B(t)$ to
be the $B$-label of the root. The comultiplication map $\Delta_B \colon QB \to
QQB$ is more subtle. Given a tree over $\Sigma$ labelled in $B$, it must return
a tree over $\Sigma$ labelled in $QB$, and it does so by an operation which we
might describe as ``recursive copying of subtrees''. It is best illustrated
through an example. Let $B = \{a, b, c, d\}$ and consider $t \in QB$ given by
\begin{equation*}
    \cd[@!C@C-2em@R-1em]{
      (\alpha,a) \ar[dr] & & (\beta,b) \ar[dl] & & (\alpha,c) \ar[dr] & & (\beta,b)\rlap{ .}  \ar[dl] \\
      & (\mathord +,c) \ar[drr] & & & & (\mathord +, b) \ar[dll] \\
      & & & {(\mathord *, a)} &
}
\end{equation*}
The tree $\Delta_B(t) \in QQB$ will have the same underlying shape, but the
$B$-label at each node will have been replaced by the $QB$-label given by the
subtree of~$t$ lying above that node. In other words, $\Delta_B(t)$ is the
following tree:
\begin{equation*}
    \cd[@!C@C-4em@R-1em]{
      (\alpha,(\alpha, a)) \ar[dr] & & (\beta,(\beta,b)) \ar[dl] & & (\alpha,(\alpha,c)) \ar[dr] & & (\beta,(\beta,b))\rlap{ .}  \ar[dl] \\
      & (\mathord +,t_1) \ar[drr] & & & & (\mathord +,t_2) \ar[dll] \\
      & & & {(\mathord *, t)} &
}
\end{equation*}
in which $t_1$ and $t_2$ are the respective elements
\begin{equation*}
    \cd[@!C@C-2em@R-1em]{
      (\alpha,a) \ar[dr] & & (\beta,b) \ar[dl]\\
      & (\mathord +,c)
} \qquad \text{and} \qquad
\cd[@!C@C-2em@R-1em]{ (\alpha,c) \ar[dr] & & (\beta,b) \ar[dl] \\
 & (\mathord +, b)
}
\end{equation*}
of $QB$. Now a $Q$-coalgebra is given by a set $B$ together with a map $s
\colon B \to QB$ satisfying the two coalgebra axioms. The first of these says
that $\epsilon_B(s(b)) = b$ for all $b \in B$: which is the requirement that
the root of the tree $s(b)$ should be labelled by $b$. The second coalgebra
axiom asks that $\Delta_B(s(b)) = (Qs)(s(b))$ for all $b \in B$: which says
that if a node  of $s(b)$ is labelled by $c \in B$, then the subtree of $s(b)$
lying above that node must coincide with $s(c)$. Our claim is that such
coalgebras correspond with closed term graphs over $\Sigma$. To illustrate
this, consider first a typical closed term graph:
\begin{equation*}
    \cd{
      \alpha \ar[d] \ar[drr] & & \beta\rlap{ .} \ar[dll] \ar[d] \\
      + \ar[dr] & & {*} \ar[dl] \\
      & +}
\end{equation*}
To obtain the corresponding $Q$-coalgebra, we choose a set of labels for the
nodes of this term graph---say $B = \{1, 2, 3, 4, 5\}$ from top to bottom and
left to right---and define a map $s \colon B \to QB$ by taking $s(1) = (\alpha,
1)$, $s(2) = (\beta, 2)$,
\begin{equation*}
    s(3) \ \ = \ \cd[@C-2em@R-1em]{(\alpha,1) \ar[dr] & & (\beta, 2) \ar[dl] \\ & (\mathord +,
    3)}\quad \text, \qquad
    s(4) \ \ = \ \cd[@C-2em@R-1em]{(\alpha,1) \ar[dr] & & (\beta, 2) \ar[dl] \\ & (\mathord *,
    4)}\ \text,
\end{equation*}
and
\begin{equation*}
    s(5) \ \ = \ \cd[@C-2em@R-1em]{(\alpha,1) \ar[dr] & & (\beta, 2) \ar[dl] & & (\alpha,1) \ar[dr] & & (\beta, 2) \ar[dl] \\
    & (\mathord +, 3) \ar[drr] & & & & (\mathord *, 4) \ar[dll]\\
    & & & (\mathord +, 5)}\ \text.
\end{equation*}
This map evidently satisfies the coalgebra axioms. In general, given a closed
term graph over $\Sigma$, we construct a $Q$-coalgebra structure on its set of
nodes as follows. For each node $b$ labelled with $\alpha$ or $\beta$ we set
$s(b) = (\alpha, b)$ or $(\beta, b)$ as appropriate. Then for each node $b$
labelled with $+$ or $*$, and with children $b_1$ and $b_2$, we set $s(b)$ to
be the tree in which the root is $(\mathord +, b)$ or $(\mathord *, b)$ as
appropriate, and the two subtrees of the root are $s(b_1)$ and $s(b_2)$.
Conversely, given any $Q$-coalgebra $s \colon B \to QB$ we define a closed term
graph as follows. Its nodes are the elements of $B$, with each such node $b$
being labelled by that element of $\Sigma$ which labels the root of $s(b)$, and
having as children those elements of $B$ which label the children of the root
of $s(b)$.

In order to capture possibly open term graph over~$\Sigma$, we now describe a
more general comonad $L$; it resides not on $\cat{Set}$ but rather on the arrow
category $\cat{Set}^\too$, the idea being that an $L$-coalgebra structure on an
object $(f \colon A \to B)$ of $\cat{Set}^\too$ should correspond to a term
graph with nodes labelled in $B$ and free variables from the set $A$. In fact,
for each set $A$, the comonad $L$ will restrict and corestrict to the coslice
category $A / \cat{Set}$, yielding a comonad whose coalgebras correspond to
term graphs with free variables from $A$; in particular, taking $A = 0$ we will
recover the earlier comonad $Q$. The underlying functor of the comonad $L$ has
its action on objects given by
\begin{equation*}
    L(f \colon A \to B) = (\ll_f \colon A \to \mu X.A + B \times F_\Sigma X)\ \text,
\end{equation*}
in which $\ll_f$ is defined as follows. Observe that the initial algebra
$\mu X. A + B \times F_\Sigma X$ may be represented as the set of those
well-founded trees built from either nodes labelled in $\Sigma \times B$ as
before, or else leaves labelled only by an element of $A$: under which
representation, the map $\ll_f \colon A \to \mu X. A + B \times F_\Sigma X$
sends $a \in A$ to the tree consisting of the bare leaf $a$. The counit $L
\Rightarrow 1$ of the comonad has as its $f$-component the morphism $\ll_f
\to f$ in $\cat{Set}^\too$ given by
\begin{equation*}
        \cd{
        A \ar[r]^{1_A}\ar[d]_{\ll_f} & A \ar[d]^f \\
        \mu X. A + B \times F_\Sigma X \ar[r]_-{\rr_f} & B
}\end{equation*} wherein $\rr_f$ sends a bare leaf $a$ to $f(a)$ and sends any
other tree to the $B$-label of its root. The comultiplication of $L$ is
analogous to that of $Q$, and we shall not spell it out here. Now, to give an
$L$-coalgebra is to give an object of $\cat{Set}^\too$---which is a function $f
\colon A \to B$---together with a map
\begin{equation*}
        \cd{
        A \ar[r]^{r}\ar[d]_{f} & A \ar[d]^{\ll_f} \\
        B \ar[r]_-{s} & \mu X. A + \rlap{$B \times F_\Sigma X$}
}\end{equation*} in $\cat{Set}^\too$ satisfying the two coalgebra axioms.
It's easy to see that the counit axiom forces $r = 1_A$, whereupon
commutativity of the preceding diagram says that for each $a \in A$, the tree
$s(f(a))$ should be the bare leaf $a$. Note that this in turn forces $f$ to be
a monomorphism. Now the counit axiom says that $\rr_f(s(b)) = b$ for all $b
\in B$. This is trivial for those $b$ in the image of $f$, whilst for those
those that are not, it says that $s(b)$ cannot be a bare leaf $a \in A$ (or
else $b = \rr_f(s(b)) = \rr_f(a) = f(a)$ contradicting $b \notin \mathrm{im}\
f$), and so must as before be a tree whose root is labelled by $(\sigma, b)$
for some $\sigma \in \Sigma$. Finally, the comultiplication axiom says exactly
what it did for $Q$: that if a node of $s(b)$ is labelled by $c \in B$, then
the subtree of $s(b)$ lying above that node must coincide with $s(c)$. What we
now claim is that an $L$-coalgebra such as we have just described corresponds
to a term graph over $\Sigma$ with free variables from $A$. We illustrate this
only with a very simple example. Let $A = \{x, y\}$ and consider the term graph
\begin{equation*}
    \cd[@-1.5em]{
      x \ar[drr] \ar[ddr] & & & y \ar[dl] \\
      & & {*} \ar[dl] \\
      & +}
\end{equation*}
with free variables from $A$. One important point to observe is that in our
framework all free variables are maximally shared: which is to say there must
be \emph{exactly one} node in the term graph corresponding to each free
variable. To obtain the $L$-coalgebra corresponding to this term graph, we let
$B = \{x, y, 1, 2\}$ be a labelling of its nodes (including those corresponding
to free variables) and let $f \colon A \hookrightarrow B$ be the evident
inclusion. To define an $L$-coalgebra structure on $f$ we must give a map $s
\colon B \to \mu X. A + B \times F_\Sigma X$ satisfying the appropriate axioms,
which we do by setting $s(x) = x$, $s(y) = y$,
\begin{equation*}
    s(1) \ \ = \ \cd[@C-2.5em@R-1.5em]{x \ar[dr] & & y \ar[dl] \\ & (\mathord +,
    1)}\qquad \text{and} \qquad
    s(2) \ \ = \ \cd[@!C@C-2.5em@R-1.5em]{& x \ar[dr] & & y \ar[dl] \\
    x \ar[dr] & & (\mathord +, 1) \ar[dl]\\
    & (\mathord *, 2)}\ \text.
\end{equation*}
\end{Ex}

This completes our worked example; and we now provide the details of our
construction in its general form. 

\begin{Defn}\label{defn:termgraphcomonad}
Let there be given a category $\E$ with finite products and coproducts, and 
an endofunctor $F \colon \E \to \E$ such that for all 
$A, B \in \E$ the endofunctor $A + B \times F(\thg)$ has an initial algebra.
We define the \emph{term graph comonad} $L_F$ associated
to $F$ as follows.
%
%
%
%
Given an object $f \colon A \to B$ of $\E^\too$, we write $Pf$ for the
initial algebra of the endofunctor $A + B \times F(\thg)$, write
\begin{equation*}
    \iota_f = [\ll_f, \theta_f] \colon A + B \times FPf \to Pf
\end{equation*}
for its algebra structure, and set $L_F(f \colon A \to B) \defeq (\ll_f
\colon A \to Pf)$. To give the action of $L_F$ on a morphism $(h, k) \colon f
\to g$ of $\E^\too$, we set
\begin{equation*}
    L_F\left(\cd{ A \ar[r]^h \ar[d]_f & C \ar[d]^g \\ B \ar[r]_k & D}\right)
\qquad = \qquad  \cd{ A \ar[d]_{\ll_f} \ar[r]^h & C \ar[d]^{\ll_g} \\ Pf \ar[r]_-{P(h, k)} & Pg}
\end{equation*}
where $P(h, k)$ is defined by universality of $Pf$ as the unique map making
\begin{equation*}
    \cd[@C+4em]{
    A + B \times FPf \ar[r]^{A + B \times FP(h,k)} \ar[d]_{\iota_f} &
     A + B \times FPg \ar[d]^{\iota_g . (h + k \times FPg)} \\
    Pf \ar[r]_{P(h,k)} & Pg
    }
\end{equation*}
commute. The
counit and comultiplication natural transformations $L_F \Rightarrow 1$ and $
L_F \Rightarrow L_FL_F$ have as their
respective components at $f \in \E^\too$ the maps $\ll_f \to
f$ and $\ll_f \to \ll_{\ll_f}$ of $\E^\too$ given by:
\begin{equation*}
        \cd{
        A \ar[d]_{\ll_f}\ar[r]^{1_A} & A \ar[d]^{f} \\
        Pf \ar[r]_-{\rr_f} & B
}\qquad \text{and} \qquad
    \cd{
        A \ar[r]^-{1_A}\ar[d]_{\ll_f} & A \ar[d]^{\ll_{\ll_f}} \\
        Pf \ar[r]_-{\sigma_f} & P\ll_f\rlap{ ;}
}
\end{equation*}here $\rr_f$ is defined by universality of $Pf$ as the
unique map making
\begin{equation*}
    \cd[@C+4em]{
    A + B \times FPf \ar[r]^{A + B \times F\rr_f} \ar[d]_{\iota_f} &
     A + B \times FB \ar[d]^{[f, \pi_1]} \\
    Pf \ar[r]_{\rr_f} & B
    }
\end{equation*}
commute, whilst $\sigma_f$ is defined by the same universal property as the
unique map making
\begin{equation*}
    \cd[@C+4em]{
       A + B \times FPf \ar[d]_{\iota_f} \ar[r]^-{A + B \times F\sigma_f} &
        A + B \times FP\ll_f \ar[d]^{[\ll_{\ll_f}, \kappa_f]} \\
        Pf \ar[r]_-{\sigma_f} &
        P\ll_f\rlap{ .}
    }
\end{equation*}
commute, where $\kappa_f$ is defined as the composite
\begin{equation*}
    B \times FP\ll_f \xrightarrow{B \times (F\rr_{\ll_f}, 1)} B \times FPf \times FP\ll_f \xrightarrow{\theta_f \times 1} Pf \times FP\ll_f \xrightarrow{\theta_{\ll_f}} P\ll_f\rlap{ .}
\end{equation*}
\end{Defn}
\begin{Prop}\label{prop:Lcomonad}
The above data determine a comonad $L_F$ on $\E^\too$.
\end{Prop}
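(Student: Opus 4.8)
The plan is to derive everything from the single fact that each $Pf = \mu X.\,A + B \times FX$ is an \emph{initial} algebra for the endofunctor $G_f \defeq A + B \times F(\thg)$, with structure map $\iota_f = [\ll_f,\theta_f]$. Write $\mathrm{inl}\colon A \to A + B \times FPf$ for the coproduct injection. All of the maps appearing in the definition ($P(h,k)$, $\rr_f$, $\sigma_f$) are \emph{defined} as the unique algebra homomorphism out of $Pf$ fitting a prescribed square, so every identity I must prove is an equation between two maps out of an initial algebra, and the uniform tool is this: to see that $u,v\colon Pf \to Z$ agree it suffices to exhibit a common $\xi\colon A + B\times FZ\to Z$ with $u\circ\iota_f = \xi\circ(A + B\times Fu)$ and likewise for $v$, for then both are the unique $G_f$-algebra map into $(Z,\xi)$. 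There are three things to establish: that $L_F$ is a functor, that $\epsilon$ and $\delta$ are natural, and that the comonad identities hold.

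\emph{Functoriality and well-definedness.} That $f \mapsto Pf$, $(h,k)\mapsto P(h,k)$ is a functor is precisely the parametrised-initial-algebra theorem: $P(\id_A,\id_B)$ and $P(h',k')\circ P(h,k)$ solve the defining universal property of $\id_{Pf}$ and of $P((h',k')\circ(h,k))$ respectively, so uniqueness yields functoriality. Naturality of $\ll$ (equivalently, commutativity of the left-hand square defining $L_F(h,k)$), and the fact that $\epsilon_f$ and $\delta_f$ really are morphisms of $\E^\too$ — i.e.\ $\rr_f\circ\ll_f = f$ and $\sigma_f\circ\ll_f = \ll_{\ll_f}$ — all follow by precomposing the relevant defining squares with $\mathrm{inl}$ and reading off the $A$-summand. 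Naturality of $\epsilon$ and of $\delta$ reduce to the codomain-side equations $\rr_g\circ P(h,k) = k\circ\rr_f$ and $\sigma_g\circ P(h,k) = P(h,P(h,k))\circ\sigma_f$, each an equation between maps out of $Pf$ that I would verify by the uniqueness principle above.

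\emph{The comonad identities.} Since $\epsilon_f$, $\delta_f$ and $L_F(h,k)$ all carry an identity (or $h$) on the domain side, each comonad axiom reduces to a single codomain-side equation out of $Pf$: the two counit laws become $\rr_{\ll_f}\circ\sigma_f = \id_{Pf}$ and $P(\id_A,\rr_f)\circ\sigma_f = \id_{Pf}$, while coassociativity becomes $\sigma_{\ll_f}\circ\sigma_f = P(\id_A,\sigma_f)\circ\sigma_f$. For each counit law I would show that the left-hand side $u$ is a $G_f$-algebra endomorphism of $(Pf,\iota_f)$ — that is, $u\circ\iota_f = \iota_f\circ(A + B\times Fu)$ — whence $u = \id_{Pf}$ by initiality. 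The computation is driven by the identity $\rr_g\circ\theta_g = \pi_1$ (read off from the defining square of $\rr_g$), which yields $\rr_{\ll_f}\circ\kappa_f = \theta_f\circ(B\times F\rr_{\ll_f})$ and $P(\id_A,\rr_f)\circ\kappa_f = \theta_f\circ(B\times FP(\id_A,\rr_f))$; substituting these into $\sigma_f\circ\iota_f = [\ll_{\ll_f},\kappa_f]\circ(A+B\times F\sigma_f)$ collapses $u\circ\iota_f$ to the required form $\iota_f\circ(A + B\times Fu)$.

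The main obstacle is coassociativity. Here neither composite is an identity, so instead of an endomorphism argument I would exhibit a single $G_f$-algebra structure $\xi\colon A + B\times FP\ll_{\ll_f} \to P\ll_{\ll_f}$ for which both $\sigma_{\ll_f}\circ\sigma_f$ and $P(\id_A,\sigma_f)\circ\sigma_f$ are homomorphisms $(Pf,\iota_f)\to(P\ll_{\ll_f},\xi)$, and then conclude by initiality. Producing $\xi$ and verifying both conditions requires unwinding the definitions of $\sigma$ and of $\kappa$ at the two nested levels $P\ll_f$ and $P\ll_{\ll_f}$ — in particular expanding $\sigma_{\ll_f}\circ\kappa_f$ and $P(\id_A,\sigma_f)\circ\kappa_f$ after feeding each composite through $\sigma_f\circ\iota_f = [\ll_{\ll_f},\kappa_f]\circ(A+B\times F\sigma_f)$ — and this is where the bookkeeping is by far the heaviest, even though the governing principle is identical to that used for the counit laws.
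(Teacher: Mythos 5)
Your proposal is correct and follows exactly the strategy the paper itself invokes: its entire proof of Proposition~\ref{prop:Lcomonad} is ``Entirely routine using the unicity of maps out of an initial algebra,'' and your argument is a faithful (and considerably more detailed) elaboration of that, with the key reductions (e.g.\ $\rr_{\ll_f}\circ\sigma_f = \id_{Pf}$ via the identity $\rr_{\ll_f}\circ\kappa_f = \theta_f\circ(B\times F\rr_{\ll_f})$, and coassociativity via a common algebra structure on $P\ll_{\ll_f}$) all checking out.
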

\begin{proof}
Entirely routine using the unicity of maps out of an initial algebra.
\end{proof}

\begin{Defn}\label{def:termgraph}
For a category $\E$ with finite products and coproducts, and an endofunctor $F
\colon \E \to \E$ such that each $A + B \times F(\thg)$ has an initial algebra,
we define the category $\cat{ATG}(F)$ of \emph{abstract term graphs over $F$}
to be the category of $L_F$-coalgebras. Explicitly, an abstract term graph over
$F$ is a pair of maps $(f \colon A \to B, s \colon B \to Pf)$ in $\E$
satisfying
    $\rr_f.s = 1_B$,  $s . f = \ll_f$, and $P(1_A, s) .
    s = \sigma_f . s$; whilst a morphism of abstract term graphs $(f, s) \to (g, s')$ is a commutative
    square
\begin{equation*}
\cd{ A \ar[r]^h \ar[d]_f & C \ar[d]^g \\ B \ar[r]_k & D}\end{equation*}
  such that $P(h,k).s = s' . k$.
\end{Defn}
As in Example~\ref{ex:workedex}, the comonad $L$ of our general construction
restricts and corestricts to a comonad on each coslice category $A / \E$, whose
coalgebras are the abstract term graphs over $F$ whose underlying object in
$\E^\too$ has domain $A$. In particular, on taking $A = 0$ we obtain a comonad
$Q \colon \E \to \E$ with $QB = \mu X. B \times FX$; which, as in our example,
we regard as the comonad for \emph{closed} term graphs over $F$. The existence
of the comonad $Q$ was indicated in~\cite{Ghani2001Algebras}, though its
meaning was not discussed; our comonad $L$ may be be seen as a natural
generalisation of it.

\section{Concrete term graphs}\label{sec:equivalence}
In this Section, we show that, by specialising the abstract notion of term
graph given in Definition~\ref{def:termgraph} to the case of a signature
endofunctor $F_\Sigma$ on the category of sets, we recover the usual notion of
acyclic term graph over $\Sigma$. First we give a formal definition of the
latter.
\begin{Defn}\label{def:termgraphconcrete}
A \emph{concrete term graph $T$} over a signature $\Sigma$ is given by:
\begin{itemize}
\item A set of \emph{input nodes} $A$;
\item A set of \emph{internal nodes} $V$;
\item A labelling function $\lbl \colon V \to \Sigma$;
\item For each $v \in V$ and $i \in 1, \dots, \abs{\lbl(v)}$ an element
    $\varphi_i(v) \in A + V$.
\end{itemize}
For such a term graph we define a binary relation on $V$ by $w \lhd v$
    iff $w = \varphi_k(v)$ for some $k$. We say that $T$ is
    \emph{acyclic} if the transitive closure of $\lhd$ is irreflexive, and
    \emph{cyclic} otherwise.
    \end{Defn}
Until further notice we will always interpret the unadorned phrase ``term
graph'' as ``acyclic term graph''.
    \begin{Defn}\label{def:termgraphcat}
If $T$ and $T'$ are concrete term graphs over $\Sigma$, then a \emph{morphism
of term graphs} $T \to T'$ comprises functions $f \colon A \to A'$ and $g
\colon V \to V'$ such that for all $v \in V$ and for all $i$, we have
$\lbl'(g(v)) = \lbl(v)$ and $ (f+g)(\varphi_i(v)) = \varphi'_i(g(v))$.
We write $\cat{CTG}(\Sigma)$ for the category of concrete term graphs over
$\Sigma$.
\end{Defn}
The term graphs of Definition~\ref{def:termgraphconcrete} do not, rightly said,
represent terms so much as computations, since we do not indicate which nodes
should be considered as return values. We may rectify this by adding a set of
\emph{output nodes} $B$ and a labelling function $B \to A + V$ to the
definition: and in Section~\ref{sec:composition} below, we shall. This will
then allow us to compose term graphs by plugging the output nodes of one into
the input nodes of another. However it is the more basic notion that is
pertinent here, as it is the one needed to prove:
\begin{Prop}\label{prop:equivalenceatg}
For any signature $\Sigma$, the categories of concrete term graphs over
$\Sigma$ and of abstract term graphs over $F_\Sigma$ are equivalent.
\end{Prop}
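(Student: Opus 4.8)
The plan is to construct an explicit equivalence of categories by defining functors in both directions and exhibiting natural isomorphisms between the round-trips and the identities. The worked Example~\ref{ex:workedex} already tells us exactly what the correspondence should be, so the real task is to make this bijection functorial and to verify that the two notions of structure match up on the nose. I will first construct a functor $\Phi \colon \cat{CTG}(\Sigma) \to \cat{ATG}(F_\Sigma)$ sending a concrete term graph $T$ to an abstract one, and then a functor $\Psi$ in the opposite direction, and finally check $\Psi\Phi \cong \id$ and $\Phi\Psi \cong \id$.

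The key observation driving the construction of $\Phi$ is the explicit description of $Pf = \mu X.\, A + B \times F_\Sigma X$ as the set of well-founded trees whose leaves are labelled by elements of $A$ and whose internal nodes are labelled by pairs $(\sigma, b) \in \Sigma \times B$ with $\abs{\sigma}$ children. Given a concrete term graph $T = (A, V, \lbl, \varphi)$, I set $B \defeq A + V$ and let $f \colon A \to B$ be the coproduct insertion (automatically a monomorphism, as required by the counit axiom analysis in the example). The coalgebra map $s \colon B \to Pf$ is defined by \emph{unfolding}: on $a \in A$ it returns the bare leaf $a$, forcing $s.f = \ll_f$; on an internal node $v \in V$ it returns the tree whose root is $(\lbl(v), v)$ and whose $i$-th child is obtained by recursively unfolding $\varphi_i(v) \in A + V = B$. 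This recursion is well-founded precisely because $T$ is \emph{acyclic}---the transitive closure of $\lhd$ being irreflexive is exactly what guarantees that the unfolding terminates and lands in the well-founded trees constituting $Pf$. One then checks the three coalgebra axioms of Definition~\ref{def:termgraph}: the counit law $\rr_f.s = 1_B$ holds by construction of the root labels, and the comultiplication law $P(1_A,s).s = \sigma_f.s$ is the formal counterpart of the ``recursive copying of subtrees'' identity verified informally in the example. For the reverse functor $\Psi$, given an abstract term graph $(f \colon A \to B, s)$ I recover $V$ as the complement $B \setminus \mathrm{im}\, f$ (using that $f$ is monic), read off $\lbl(v)$ and the $\varphi_i(v)$ from the root of the one-step tree $s(v)$, and invoke the analysis in the example showing that $s(b)$ for $b \notin \mathrm{im}\, f$ must have a root of the form $(\sigma, b)$.

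On morphisms, $\Phi$ sends a concrete morphism $(f \colon A \to A', g \colon V \to V')$ to the square $(h,k)$ with $h = f$ and $k = f + g \colon A + V \to A' + V'$; the condition $P(h,k).s = s'.k$ reduces, by the universal property defining $P(h,k)$, to the label-preservation and child-preservation equations $\lbl'(g(v)) = \lbl(v)$ and $(f+g)(\varphi_i(v)) = \varphi'_i(g(v))$ of Definition~\ref{def:termgraphcat}. Functoriality of $\Psi$ follows symmetrically. The composites $\Psi\Phi$ and $\Phi\Psi$ are then \emph{literally} identities up to the canonical relabelling $B \cong A + V$, giving the required natural isomorphisms; since both functors are bijective on objects (up to this relabelling) and fully faithful by the matched morphism conditions, this establishes the equivalence.

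The step I expect to be the main obstacle is the verification that $\Phi$ is \emph{well-defined on objects}---specifically, that the unfolding $s$ genuinely exists and satisfies the comultiplication axiom $P(1_A,s).s = \sigma_f.s$. The existence of the unfolding is where acyclicity is essential and must be argued carefully, either by induction on the rank of a node in the well-founded relation $\lhd$ or by exhibiting $s$ as a map into the initial algebra $Pf$ via an appropriate algebra structure on $B$. The comultiplication identity is the most computation-heavy point, since $\sigma_f$ is defined through the intricate composite $\kappa_f$ in Definition~\ref{defn:termgraphcomonad}; matching both sides will require unwinding that definition and comparing the resulting trees node by node, which is exactly the formal content underlying the informal ``recursive copying'' picture. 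Everything else---monomorphicity of $f$, the shape of $s(b)$ on non-image nodes, and the translation of morphism conditions---is already essentially carried out in the discussion surrounding Example~\ref{ex:workedex}, so I would lean on that analysis rather than redo it.
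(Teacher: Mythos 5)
Your overall strategy coincides with the paper's: the forward functor is defined by unfolding a concrete term graph into the initial algebra $Pf$ by recursion on the well-founded relation $\lhd$, the inverse direction recovers $V = B \setminus \mathrm{im}\, f$ and reads off labels and children from the roots of the trees $s(v)$, and the morphism conditions match up as you describe. You have also correctly located the delicate points on the forward side (well-definedness of the unfolding via acyclicity, and the comultiplication axiom). However, there are two genuine gaps in your treatment of the inverse functor $\Psi$.

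First, you never verify that the concrete term graph recovered from an abstract one is \emph{acyclic}. This is not automatic and cannot be waved through: $\cat{CTG}(\Sigma)$ consists of acyclic term graphs by convention, so without this check $\Psi$ does not land in the right category, and essential surjectivity (equivalently, $\Phi\Psi \cong \id$) fails to be established. The paper proves it by introducing a depth function $d \colon Pf \to \mathbb{N}$ with $d([a]) = 0$ and $d(\alpha_b(\vec z)) = \max_i d(z_i) + 1$, and showing that $v \lhd w$ forces $d(s(v)) < d(s(w))$; some such argument exploiting the well-foundedness of the initial algebra is needed. Second, to define $\varphi_i(v)$ from the children $z_1, \dots, z_n$ of the root of $s(v)$ you must take $\varphi_i(v) = \rr_f(z_i)$, and for the round-trip $\Phi\Psi \cong \id$ you need the whole tree $s(v)$ to be reconstructible from this one-step data, i.e.\ the identity $z_i = s(\rr_f(z_i))$. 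This is precisely where the comultiplication axiom $P(1,s).s = \sigma_f.s$ is consumed on the $\Psi$ side (together with injectivity of $s$, which follows from $\rr_f.s = 1_B$), and it is not ``literally an identity up to relabelling'' without that argument. Both gaps are fixable, and with them filled your proof becomes essentially the paper's, repackaged as an explicit inverse functor rather than as fully-faithful-plus-essentially-surjective.
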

\begin{proof}
Recall that an abstract term graph over $F_\Sigma$ is a coalgebra for the
comonad $L \defeq L_{F_\Sigma}$ of $\cat{Set}^\too$ obtained by the
construction of Section~\ref{sec:definition}. We begin by making explicit the
structure of this comonad. On objects, $L$ sends $f \colon A \to B$ to
$\ll_f \colon A \to Pf$, where $Pf$ is the set defined by the following
inductive clauses:
\begin{itemize}
\item $[a] \in Pf$ for all $a \in A$;
\item $\alpha_b(z_1, \dots, z_{\abs{\alpha}}) \in Pf$ for all $b \in B$,
    $\alpha \in \Sigma$ and $z_1, \dots, z_{\abs \alpha} \in Pf$,
\end{itemize}
and where $\ll_f(a) = [a]$. We introduce the notational convenience of
abbreviating $z_1, \dots, z_{\abs{\alpha}}$ as $\vec z$, and---for any suitable
function $\Gamma$---abbreviating $\Gamma(z_1), \dots, \Gamma(z_{\abs{\alpha}})$
as $\Gamma(\vec z)$. With this notation, the action of $L$ on morphisms
\begin{equation*}
    \cd{A \ar[d]_f \ar[r]^h & C \ar[d]^g \\ B \ar[r]_k & D} \qquad \mapsto \qquad
    \cd{A \ar[d]_{\ll_f} \ar[r]^h & C \ar[d]^{\ll_g} \\ Pf \ar[r]_{P(h,k)} & Pg}
\end{equation*}
may be recursively defined by
\begin{equation*}
    P(h,k)([a]) = [h(a)] \qquad \text{and} \qquad P(h,k)(\alpha_b(\vec z)) = \alpha_{k(b)}(P(h,k)(\vec z))\ \text.
\end{equation*}
Now the map $\rr_f \colon Pf \to B$ giving the counit of $L$ at $f$ is defined
by $\rr_f([a]) = f(a)$ and $\rr_f(\alpha_b(\vec z)) = b$, whilst the map
$\sigma_f \colon Pf \to P\ll_f$ giving the comultiplication at $f$ is
defined recursively by
\begin{equation*}
    \sigma_f( [a] ) = [a] \qquad \text{and} \qquad
    \sigma_f( \alpha_b(\vec z) ) =
    \alpha_{\alpha_b(\vec z)}(\sigma_f(\vec z))\ \text.
\end{equation*}

We will prove the result by defining a functor $F \colon \cat{CTG}(\Sigma) \to
\cat{ATG}(F_\Sigma)$ and showing it to be an equivalence. On objects, the
functor $F$ assigns to each concrete term graph $T = (A, V, \lbl, \varphi)$
the following $L$-coalgebra. Its underlying object in $\cat{Set}^\too$ is
$\mathsf{inl} \colon A \to A + V$. According to Definition~\ref{def:termgraph},
its coalgebra structure is given by a map $s \colon A + V \to P(\mathsf{inl})$,
which will be obtained as follows.
For $a \in A$, we take $s(a) = [a]$. To define $s$ on $V$, we first observe
that since the term graph $T$ is acyclic, the transitive closure $<$ of $\lhd$
is irreflexive, and hence a (strict) partial order on $V$. Moreover, for each
$v \in V$, the set $\{ w\, \mid\, w \lhd v\,\}$ is finite, and hence $\{w
\,\mid\, w < v\}$ is too; we denote its cardinality by $c(v)$. Now given $v \in
V$, suppose we have recursively defined $s(w)$ for all $w \in V$ with $c(w) <
c(v)$. By irreflexivity of $\lhd$, this means in particular that we have
defined $s(w)$ for all $w \lhd v$; and so may validly define
\begin{equation}\label{eq:definingvarphi}
    s(v) = \lbl(v)_v\big(\,s(\varphi_1(v)), \dots, s(\varphi_n(v))\,\big) \qquad \text{(where $n = \abs{\lbl(v)}$).}
\end{equation}
By recursion, this defines $s$ at every $v \in V$. It remains to verify the
coalgebra axioms. It's easy to show that $s.\mathsf{inl} =
\ll_\mathsf{inl}$ and that $\rr_\mathsf{inl} .s = 1_{A+V}$; so it remains
to show that $P(1_A, s) . s = \sigma_f . s$. This is trivial on elements of $A
\subseteq A + V$; whilst to show it on elements of $V \subseteq A + V$ we
proceed by induction. Suppose that $v \in V$ and that we have verified the
equality for all $w$ with $c(w) < c(v)$. Writing $\vec \varphi$ as an
abbreviation for $s(\varphi_1(v)), \dots, s(\varphi_n(v))$, we have
\begin{align*}
    \sigma_f(s(v)) &=
    \sigma_f(\lbl(v)_v(s(\vec \varphi)))\\
    &= \lbl(v)_{s(v)}(\sigma_f(s(\vec \varphi)))\\
    &= \lbl(v)_{s(v)}(P(1, s)(s(\vec \varphi)))\\
    &= P(1,s)(\lbl(v)_v(s(\vec \varphi)))\\
    &= P(1,s)(s(v))
\end{align*}
by the recursive definitions of $\sigma_f$, $P(1, s)$ and~$s$ and the inductive
hypothesis. Hence by induction we have $P(1_A, s).s = \sigma_f . s$ as
required. This completes the definition of the functor $F \colon
\cat{CTG}(\Sigma) \to \cat{ATG}(F_\Sigma)$ on objects. To define it on
morphisms, let $T' = (A', V', \lbl', \varphi')$ be another concrete term
graph, and $(f, g) \colon T \to T'$ a morphism between them. We shall take
$F(f,g) \colon FT \to FT'$ to be given by
\begin{equation}\label{eq:imagefg}
    \cd[@C+1em]{
    A \ar[r]^{f} \ar[d]_{\mathsf{inl}} &
    A' \ar[d]^{\mathsf{inl}} \\
    A+V \ar[r]_-{f+g} &
    A'+V'\rlap{ .}}
\end{equation}
For this to be well defined, we must show that~\eqref{eq:imagefg} is a map of
$L$-coalgebras $FT \to FT'$, i.e., that $P(f,f+g).s = s'.(f+g)$ holds. This is
straightforward on elements of $A \subseteq A + V$; whilst for elements of $V
\subseteq A + V$, we proceed once more by induction. Suppose that $v \in V$ and
that we have verified the equality for all $w \in V$ with $c(w) < c(v)$. Then
we have that
\begin{align}
\nonumber    P(f,f+g) (s(v)) &=
    P(f,f+g)(\lbl(v)_v(s(\varphi_1(v))), \dots, s(\varphi_n(v))))\\
\nonumber    &= \lbl(v)_{gv}(P(f, f+g)(s(\varphi_1(v))), \dots, P(f, f+g)(s(\varphi_1(v))))\\
\label{eq:functorFmor1}    &= \lbl(v)_{gv}(s'(f+g)(\varphi_1(v)), \dots, s'(f+g)(\varphi_n(v)))\ \text,
\end{align}
whilst
\begin{equation}\label{eq:functorFmor2} s'(f+g)(v) = s'(gv) =
\lbl'(gv)_{gv}(s'(\varphi'_1(gv)), \dots, s'(\varphi'_n(gv)))\
\text.
\end{equation}
But since $(f,g)$ is a map of term graphs,~\eqref{eq:functorFmor1}
and~\eqref{eq:functorFmor2} are equal; and so by induction we conclude that
$P(f,f+g).s = s'.(f+g)$ as required. This completes the definition of the
functor $F$; we next show that it is fully faithful. For this, let $T$ and $T'$
be concrete term graphs as before, and suppose that
\begin{equation*}
    \cd[@C+1em]{
    A \ar[r]^{f} \ar[d]_{\mathsf{inl}} &
    A' \ar[d]^{\mathsf{inl}} \\
    A+V \ar[r]_-{h} &
    A'+V'}
\end{equation*}
is a map of $L$-coalgebras $FT \to FT'$. We claim first that $h = f + g$ for
some $g \colon V \to V'$; for which it is clearly enough to show that $h(V)
\subseteq V'$. But were this not so, we would have $h(v) = a'$ for some $v \in
V$ and $a' \in A'$, whence $[a'] = s'(h(v)) = P(f,h)(s(v))$, which is
impossible by the definitions of $P(f,h)$ and $s$. Consequently, every map of
$L$-coalgebras $FT \to FT'$ is of the form~\eqref{eq:imagefg}; and so we will
be done if we can show that for every such map, the pair $(f,g)$ is a map of
concrete term graphs $T \to T'$. But since for every $v \in V$ we have
$P(f,f+g)(s(v)) = s'(f+g)(v)$, equating~\eqref{eq:functorFmor1}
and~\eqref{eq:functorFmor2} shows that that $\lbl(v) = \lbl'(g(v))$ and
that $s'((f+g)(\varphi_i(v))) = s'(\varphi'_i(g(v)))$ for each $1 \leqslant i
\leqslant n$; which since $s'$ is injective, implies that $(f+g)(\varphi_i(v))
= \varphi'_i(g(v))$ for each $i$, so that $(f,g)$ is a map of term graphs as
desired.

Thus $F$ is a fully faithful functor; to complete the proof, we must show that
it is also essentially surjective. So for each $L$-coalgebra $\ell$ we must
find a concrete term graph $T$ and an isomorphism $FT \cong \ell$. From
Definition~\ref{def:termgraph}, to give $\ell$ is to give maps $f \colon A \to
B$  and $s \colon B \to Pf$ in $\E$ satisfying three axioms. The first is that
$s . f = \ll_f$, which says that $s(f(a)) = [a]$ for each $a \in A$. Note
that this forces $f$ to be injective, so that taking $V = B \setminus
\mathrm{im} f$, we have a bijection $B \cong A + V$ under which $f$ is
identified with the coproduct injection $A \hookrightarrow A + V$. The next
coalgebra axiom is that $\rr_f.s = 1_B$, which by case analysis says that
\begin{equation*}
    s(b) = [a] \ \Rightarrow \ b = f(a) \qquad \text{and} \qquad
    s(b) = \alpha_{b'}(\vec z) \ \Rightarrow \ b = b'\ \text;
\end{equation*}
whence $b = f(a)$ if and only if $s(b) = [a]$, so that for $b \in B \setminus
\mathrm{im} f = V$, we must have $s(b) = \alpha_b(\vec z)$ for some $\alpha \in
\Sigma$ and $z_1, \dots, z_{\abs \alpha} \in Pf$. We claim that these $z_i$ in
fact satisfy $z_i = s(\rr_f(z_i))$. Indeed, either $z_i = [a]$ for some $a \in
A$, in which case $s(\rr_f(z_i)) = s(f(a)) = [a] = z_i$; or $z_i =
\beta_c(\vec w)$ for some $\beta$, $c$ and $\vec w$: in which case by the third
coalgebra axiom $P(1,s).s = \sigma_f.s$ we have
\begin{equation*}
    \alpha_{s(b)}(P(1,s)(\vec z)) = P(1,s)(s(b)) =
    \sigma_f(s(b)) = \alpha_{s(b)}(\sigma_f(\vec z))
\end{equation*}
whence $P(1, s)(z_i) = \sigma_f(z_i)$, which implies that
\begin{equation*}
    \beta_{s(c)}(P(1,s)(\vec w)) = P(1, s)(z_i) =
    \sigma_f(z_i) = \beta_{z_i}(\sigma_f(\vec w))
\end{equation*}
so that in particular $z_i = s(c) = s(\rr_f(z_i))$ as claimed. Consequently,
we uniquely determine a function $\lbl \colon V \to \Sigma$ and an
assignation to each $v \in V$ of elements $\psi_1(v), \dots,
\psi_{\abs{\lbl(v)}}(v) \in B$ by the requirement that for all $v \in V$,
\begin{equation}
\label{eq:definingpsi}    s(v) = \lbl(v)_v(s(\psi_1(v)),\dots,s(\psi_n(v))) \qquad \text{(where $n=\abs{\lbl(v)}$),}
\end{equation}
since if $s(v) = \alpha_v(\vec z)$, we necessarily have $\lbl(v) = \alpha$
and $\psi_i(v) = \rr_f(z_i)$ (the second being forced by injectivity of $s$).

We now have a term graph $T = (A, V, \lbl, \varphi)$, where $\varphi$ is
obtained by composing the $\psi$ above with the canonical isomorphism $B \cong
A + V$. It is clear by comparing~\eqref{eq:definingvarphi}
and~\eqref{eq:definingpsi} that $FT$ is isomorphic to the $L$-coalgebra we
started with; and so we will be done as soon as we have checked that $T$ is
acyclic. To do this we consider the function $d \colon Pf \to \mathbb N$
defined by $d([a]) = 0$ and $d(\alpha_b(\vec z)) = \mathrm{max}(d(z_1), \dots,
d(z_n))+1$. Recall that for $v, w \in V$, the relation $v \lhd w$ holds just
when $v = \varphi_k(w)$ for some $k$; but from above, this happens just when
$s(w) = \alpha_b(\vec z)$ and $v = \rr_f(z_k)$ for some $k$, which is equally
well when $s(v) = z_k$ for some $k$. But this implies that $d(v) < d(w)$ in
$\mathbb N$, so that the transitive closure of $\lhd$ must be acyclic as
required.
\end{proof}

\section{Generalisations}\label{sec:generalisations}
Having established the correspondence between concrete term graphs over a
signature and abstract term graphs over the corresponding endofunctor, let us
now see how our abstract notion extends beyond that case. Note that for the
moment, all of our generalisations will remain in the acyclic world; we shall
consider cyclicity in some detail in the following section.

\textbf{Operations with unordered inputs}. To a finitary signature $\abs{\thg}
\colon \Sigma \to \mathbb N$ we can associate an endofunctor $F'_\Sigma$ of
$\cat{Set}$, different from that of~\eqref{eq:signature}, by the formula:
\begin{align}\label{eq:signature2}
    F'_\Sigma(X) = \textstyle \sum_{\sigma \in\Sigma} (X^{\abs{\sigma}} / S_{\abs \sigma})\ \text;
\end{align}
here the set $X^{\abs{\sigma}}$ is being quotiented by that action of the
symmetric group on $\abs{\sigma}$ letters which permutes the order of the
factors. The abstract term graphs generated by such endofunctors are like
concrete term graphs in which the \emph{ordering} of the input variables to an
operation is considered irrelevant.

\textbf{Infinitary operations}. Staying in the category of sets, we can clearly
lift the restriction that the operations in our signature be finitary: indeed,
a signature with infinitary operations still gives rise to a signature
endofunctor by the formula~\eqref{eq:signature} (or~\eqref{eq:signature2} for
that matter).

\textbf{Typed operations}. A \emph{many-sorted signature} is given by a set $S$
of sorts, a set $\Sigma$ of operations, and typing functions for input $i
\colon \Sigma \to \mathbb N^S$ and output $o \colon \Sigma \to S$. Any such
signature generates an endofunctor $F_\Sigma$ of $\cat{Set}^S$ by the formula
\begin{equation*}
   F_\Sigma(\, X_s \mid s \in S\,) = \bigg(\, \sum_{\substack{\sigma \in \Sigma\\o(\sigma) = t}} \prod_{s \in S} (X_s) ^ {i(\sigma, s)}\,\bigg|\, t \in S\, \bigg)\ \text.
\end{equation*}
The corresponding notion of abstract term graph corresponds exactly to the
notion of many-sorted concrete term graph (as defined
in~\cite{Hasegawa1997Models}, for example).

\textbf{Higher-order syntax with sharing}. In~\cite{Fiore1999Abstract}, Fiore,
Plotkin and Turi describe a categorical framework for the study of second-order
syntax. The key idea is to replace the category of sets with the presheaf
category $\cat{Set}^{\mathbb F}$, where $\mathbb F$ denotes the category of
finite cardinals. In this category, initial algebras for endofunctors can be
seen as the collection of terms inductively generated by a second-order
signature, with the value of such an initial algebra at some $n \in \mathbb F$
being the set of all terms with $n$ free variables over the signature. 

This framework was later extended by Tanaka and Power~\cite{Tanaka2000Abstract,Tanaka2005Pseudo-Distributive,Tanaka2006A-Unified} to deal with
more general second-order syntaxes, in which, for example, the second-order entities may be constrained to bind their first-order arguments in a linear fashion. In this more general setting
one still works with a presheaf category, and still describes terms over a second-order signature in terms of initial algebras for suitable endofunctors.

In both situations our construction applies, and so we obtain corresponding  notions of second-order
syntax with sharing. A thorough investigation of this will be a paper in itself
but we hope to convey at least some of what is involved through a simple
example. Consider the following sequent calculus for polynomials over
$\mathbb N$:
\begin{mathpar}
    \inferrule*[right=($1 \leqslant i \leqslant n$)]{\ }{x_1, \dots, x_n \vdash x_i}

    \inferrule*{\ }{x_1, \dots, x_n \vdash 0}

    \inferrule{
      x_1, \dots, x_n \vdash p \\
      x_1, \dots, x_n \vdash q
    }{
      x_1, \dots, x_n \vdash (p + q)}

    \inferrule{
      x_1, \dots, x_n \vdash p \\
      x_1, \dots, x_n \vdash q
    }{
      x_1, \dots, x_n \vdash (p \cdot q)}

    \inferrule*[right=($a \in \mathbb N$)]{
      x_1, \dots, x_n, x_{n+1} \vdash p
    }{
      x_1, \dots, x_n \vdash p[a / x_{n+1}]}
\end{mathpar}
We can organise the terms of this sequent calculus into an object $P \in
\cat{Set}^{\mathbb F}$ in which $Pn$ is the set of all derivable judgements
$x_1, \dots, x_n \vdash p$ and the reindexing function $Pf \colon Pn \to Pm$ is
defined by induction in the obvious way. We may characterise the object $P$ as
the initial algebra for the endofunctor $F$ of $\cat{Set}^\mathbb F$ given by:
\begin{equation*}
    (FX)(n) = n + 1 + (Xn \times Xn) + (Xn \times Xn) + (\mathbb N \times X(n+1))
\end{equation*}
with each term in this sum corresponding to one of the deduction rules listed
above. So we think of elements of $P$ as being closed terms over the signature
$F$ (where here closed is meant in the sense of having no second-order
variables). Applying the constructions of Section~\ref{sec:definition} now yields a corresponding
notion of term \emph{graph}. Without wishing to enter into any detailed calculations, let us at least give an example of what such a term graph will look like. Consider the object $y_3 + y_2 + y_2 + y_3 + y_1$ of $\cat{Set}^\mathbb F$, where $y \colon \mathbb F^\op \to \cat{Set}^\mathbb F$ is the Yoneda embedding. There is a closed term graph structure on this which can represented in \textsf{let} notation as:
\begin{align*}
    x \vdash & \ \mathsf{let}\ p(x,y,z) = y\ \mathsf{in} \\
    & \ \mathsf{let}\ q(x,y) = y\ \mathsf{in} \\
    & \ \mathsf{let}\ r(x,y) = p(x,y,x) + q(y,x)\ \mathsf{in} \\
    & \ \mathsf{let}\ s(x,y,z) = r(x,y) \times r(y,z)\ \mathsf{in} \\
    & \ s(x, x, 48)\ \text.
\end{align*}
It should be clear from this that what such term graphs share
are \emph{second-order} terms: namely the polynomials $p, q, r$ and
$s$. Likewise, when we move from closed term graphs to arbitrary ones, what we
are adding are second-order variables. By way of illustration, we could in the
previous example turn $p$ and $q$ into variables, obtaining a term which in
$\mathsf{let}$ notation would be written as
\begin{align*}
    p(\thg, \thg, \thg), q(\thg, \thg), x \vdash &
    \ \mathsf{let}\ r(x,y) = p(x,y,x) + q(y,x)\ \mathsf{in} \\
    & \ \mathsf{let}\ s(x,y,z) = r(x,y) \times r(y,z)\ \mathsf{in} \\
    & \ s(x, x, 48)\ \text.
\end{align*}
This corresponds to an abstract term graph whose underlying object in
$(\cat{Set}^\mathbb F)^\too$ is given by $y_3 + y_2 \to y_3 + y_2 + y_2 + y_3 + y_1$.

\textbf{Proof theory}. Our final generalisation is very much in the same spirit
as the previous one, and so we do not dwell on the details but merely indicate
a potential application. The categorical proof theory of classical logic is
famously thorny and the hope is that the notion of abstract term graph may
provide an elegant way of encoding some of the computational structure of
classical proofs. The thought is as follows. Starting from some set $V$ of
primitive propositions, we may form $\mathbb F(V)$, the free category with
strictly associative finite products on $V$. Its objects are finite lists $A :=
(A_1, \dots, A_n)$ of elements of $V$ and its morphisms $(A_1, \dots, A_n) \to
(B_1, \dots, B_m)$ are functions $n \to m$ such that $B_{f(i)} = A_i$ for $1
\leqslant i \leqslant m$. Now we can express the collection of classical
proof-trees over the basic propositions in $V$ as an initial algebra for an
endofunctor on the category $\cat{Set}^{\mathbb F V \times V}$ (for a one-sided
sequent calculus) or on the category $\cat{Set}^{\mathbb F V \times (\mathbb F
V)^{\mathrm{op}}}$ (for a two-sided one). Passing to the corresponding notion
of term graph we obtain structures which should allow a smooth representation
of the duplication and discard of sub-proofs which is central to classical
cut-elimination.

\section{Duality and cyclicity}\label{sec:cyclic}
In this section, we describe how \emph{cyclic} term graphs over an endofunctor
$F \colon \E \to \E$ may be captured in our framework. They will also arise as
the coalgebras for a comonad on $\E^\too$; but this comonad will no longer be
obtained by our original construction, but rather by its dual in the following
sense. The endofunctor $F$ is equally well an endofunctor $F^\op \colon \E^\op
\to \E^\op$; and if when we regard it in this way, the hypotheses of Definition~\ref{defn:termgraphcomonad}
are still satisfied (which amounts to the existence of certain \emph{final
coalgebras} in $\E$) then we may apply our construction in $\E^\op$ and regard
the result as structure back in $\E$. Prima facie there is a serious problem
with this, since on the first dualisation we obtain a comonad $L_{F^\op}$ on
$(\E^\op)^\too$, which on the second dualisation becomes a \emph{monad} and not
a comonad on $\E^\too$. We could overcome this if we were to know that the
construction of Definition~\ref{defn:termgraphcomonad} produced not just a comonad, but also at the same time a
monad on $\E^\too$; for then the same would be the true when we passed to the
dual. Remarkably, this is the case; and we now describe this monad explicitly.

\begin{Defn}\label{def:termgraphmonad}
Let there be given a category $\E$ with finite products and coproducts, and 
an endofunctor $F \colon \E \to \E$ such that for all 
$A, B \in \E$ the endofunctor $A + B \times F(\thg)$ has an initial algebra.
We define the \emph{term graph monad} $R_F$ associated
to $F$ as follows. The underlying functor is
given on objects by $R_F(f \colon A \to B) = (\rr_f \colon Pf \to B)$, and on
morphisms by:
\begin{equation*}
    R\left(\cd{ A \ar[r]^h \ar[d]_f & C \ar[d]^g \\ B \ar[r]_k & D}\right)
\qquad = \qquad  \cd{ Pf \ar[d]_{\rr_f} \ar[r]^{P(h,k)} & Pg \ar[d]^{\rr_g} \\ C \ar[r]_-{k} & D\rlap{ .}}
\end{equation*}
Here, $\rho$ and $P$ are given as in Definition~\ref{defn:termgraphcomonad}. 
The transformations $1 \Rightarrow R_F$ and $R_F R_F
\Rightarrow R_F$ making $R_F$ into a monad have their components $f \to \rr_f$
and $\rr_{\rr_f} \to \rr_f$ at some $f \in \E^\too$ given by
\begin{equation*}
\cd{
        A \ar[d]_{f}\ar[r]^-{\ll_f} & Pf \ar[d]^{\rr_f} \\
        B \ar[r]_-{1_B} & B
}
\qquad \text{and} \qquad    \cd{
        P\rr_f \ar[d]_{\rr_{\rr_f}} \ar[r]^-{\pi_f} & Pf \ar[d]^{\rr_f} \\
        B \ar[r]_-{1_B} & B
}
\end{equation*}
respectively. Now $\gamma$ is also as in Definition~\ref{defn:termgraphcomonad}, and the only new datum is the morphism $\pi_f \colon P\rr_f \to
\rr_f$, which we define by the universality of $Pf$ as the unique map
rendering commutative the diagram:
\begin{equation*}
    \cd[@C+4em]{
        Pf + B \times FP\rr_f \ar[r]^{Pf + B \times F\pi_f} \ar[d]_{\iota_{\rr_f}}&
        Pf + B \times FPf \ar[d]^{[1, \theta_f]} \\
        P\rr_f \ar[r]_-{\pi_f} &
        Pf\rlap{ .}
    }
\end{equation*}
\end{Defn}
\begin{Prop}\label{prop:Rmonad}
The above data determine a monad $R_F$ on $\E^\too$.
\end{Prop}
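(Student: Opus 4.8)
The plan is to verify directly that the data assembled in Definition~\ref{def:termgraphmonad} satisfy the monad axioms, with every verification reducing---exactly as in Proposition~\ref{prop:Lcomonad}---to the uniqueness of maps out of the initial algebra $(Pf, \iota_f)$. Much of the work can be imported from the comonad $L_F$, since $R_F$ reuses both the action of $P$ on morphisms and the family $\rr$. Concretely, I would first observe that $R_F$ is a functor: preservation of identities and composites is trivial in the codomain coordinate (which is simply $k$), and in the domain coordinate it is the functoriality of $P$ already established in the construction of $L_F$. That each $R_F(h,k) = (P(h,k),k)$ is a genuine morphism $\rr_f \to \rr_g$ of $\E^\too$---i.e.\ that $\rr_g \circ P(h,k) = k \circ \rr_f$---is literally the naturality square of the counit $\rr$ of $L_F$, and so requires no new argument.

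Second, I would dispatch the unit $1 \Rightarrow R_F$ and its naturality. That $\eta_f = (\ll_f, 1_B)$ is a morphism $f \to \rr_f$ amounts to $\rr_f \circ \ll_f = f$, which is immediate on precomposing the defining equation of $\rr_f$ with the coproduct injection $\ll_f = \iota_f \circ \mathsf{inl}$. Naturality of $\eta$ is trivial in the codomain coordinate and, in the domain coordinate, is exactly the commutativity of the square defining $L_F(h,k)$; so again nothing new is needed. The analogous naturality of the multiplication reduces, in its one nontrivial coordinate, to $P(h,k)\circ \pi_f = \pi_g \circ P(P(h,k),k)$, a routine uniqueness argument over the initial algebra $P\rr_f$.

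The genuinely new content concerns $\pi_f$. That the multiplication component $\mu_f = (\pi_f, 1_B)$ is a morphism $\rr_{\rr_f} \to \rr_f$ of $\E^\too$ means $\rr_f \circ \pi_f = \rr_{\rr_f}$, and this I would prove by initiality: both sides are maps $P\rr_f \to B$, and I claim both are homomorphisms from $(P\rr_f, \iota_{\rr_f})$ to $B$ equipped with the algebra structure $[\rr_f, \pi_1]$. For $\rr_{\rr_f}$ this holds by its very definition; for $\rr_f \circ \pi_f$ one chases the defining square of $\pi_f$ and uses the identity $\rr_f \circ \theta_f = \pi_1$ (itself read off from the defining square of $\rr_f$). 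Uniqueness then forces the two to coincide. The unit laws fall out similarly: $\pi_f \circ \ll_{\rr_f} = 1_{Pf}$ is immediate from the defining square of $\pi_f$ precomposed with $\mathsf{inl}$, while $\pi_f \circ P(\ll_f, 1_B) = 1_{Pf}$ follows on checking that the left-hand side is an endomorphism of the initial algebra $(Pf, \iota_f)$ and invoking uniqueness.

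The main obstacle is associativity. In the codomain coordinate it is again trivial, so everything comes down to the identity $\pi_f \circ P(\pi_f, 1_B) = \pi_f \circ \pi_{\rr_f}$ of maps $P\rr_{\rr_f} \to Pf$. I would establish this by the usual technique: exhibit both composites as homomorphisms out of the initial algebra $(P\rr_{\rr_f}, \iota_{\rr_{\rr_f}})$ into a common algebra structure on $Pf$, and conclude by uniqueness. This is the fiddliest computation, since it must track the grafting map $\theta$ through the definition of $\pi$ at two successive levels---the same kind of bookkeeping that produces the composite $\kappa_f$ in Definition~\ref{defn:termgraphcomonad}---but it remains entirely mechanical once the appropriate algebra structures are written down. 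Assembling these verifications yields the monad laws.
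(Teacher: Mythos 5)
Your proposal is correct and follows exactly the strategy the paper intends: the paper's own proof is the single sentence that all verifications are ``entirely routine calculations with the universal property of an initial algebra,'' and your write-up simply supplies those routine initiality arguments in detail (each identity checked by exhibiting both sides as algebra maps out of the relevant initial algebra and invoking uniqueness). Nothing in your outline deviates from or goes beyond what the paper takes for granted.
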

\begin{proof}
Again, all of these are entirely routine calculations with the universal
property of an initial algebra.
\end{proof}

Unwinding the definitions show that to give an $R_F$-algebra structure on a map
$f \colon A \to B$ is to give a morphism $p \colon Pf \to A$ satisfying the
three equations $p.\ll_f = 1_A$, $f.p = \rr_f$ and $p.P(p,1_B) = p.\pi_f$.
In fact, since in giving $p$ we are mapping \emph{out} of an initial algebra,
this description simplifies further.
\begin{Prop}\label{prop:rfcharacterise}
To give an $R_F$-algebra structure on some $(f \colon A \to B) \in \E^\too$ is
equally well to give a map $\phi \colon B \times FA \to A$ satisfying $f.\phi =
\pi_1$; and in these terms, a morphism $(h, k) \colon f \to g$ of $\E^\too$ is
a map of $R_F$-algebras $(f, \phi) \to (g, \phi')$ just when the equation
$h.\phi = \phi'.(k \times Fh)$ is validated.
\end{Prop}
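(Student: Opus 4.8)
The plan is to exploit the defining universal property of $Pf$ as an initial algebra for the endofunctor $A + B \times F(\thg)$, which tells us that maps \emph{out} of $Pf$ are determined uniquely by recursion. Concretely, a map $p \colon Pf \to A$ is exactly an $F$-algebra homomorphism from the initial algebra to some algebra structure on $A$; but an algebra structure on $A$ for the endofunctor $A + B \times F(\thg)$ is a map $[u, v] \colon A + B \times FA \to A$, whose first component $u \colon A \to A$ and second component $v \colon B \times FA \to A$ may be specified independently. So the first step is to recognise that giving $p \colon Pf \to A$ is the same as giving such a pair $(u, v)$, with $p$ being the unique homomorphism out of $\iota_f = [\ll_f, \theta_f]$.

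Next I would impose the three $R_F$-algebra equations and see how they constrain $(u,v)$. The equation $p . \ll_f = 1_A$ is the pre-composition of $p$ with the coproduct-injection part of $\iota_f$; by the recursive defining diagram for $p$, this forces the first component $u$ to be $1_A$. Thus once the axioms are in play the pair $(u,v)$ reduces to the single datum $v \colon B \times FA \to A$, which I rename $\phi$. This is the heart of the simplification: the inductive clause for leaves is pinned down by the unit law, leaving only the ``operation'' clause free. The remaining content of the first axiom is then automatic, and what must be checked is that the other two algebra axioms, $f.p = \rr_f$ and $p.P(p,1_B) = p.\pi_f$, translate respectively into the single equation $f.\phi = \pi_1$ and into no further condition at all.

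For the equation $f.p = \rr_f$, I would argue that both $f.p$ and $\rr_f$ are maps $Pf \to B$ out of the initial algebra, so by uniqueness it suffices to check they induce the same algebra structure on $B$; comparing the recursive clauses (using the defining diagram for $\rr_f$ and the fact that $u = 1_A$) shows this reduces exactly to the requirement $f.\phi = \pi_1$ on the operation component, the leaf component being handled by $f.\ll_f$-type bookkeeping. For the third axiom, the key point is that with $p$ now given recursively by $\phi$, both $p.P(p,1_B)$ and $p.\pi_f$ are again maps out of $Pf$, and a routine induction using the recursive descriptions of $P(p,1_B)$ and of $\pi_f$ from Definition~\ref{def:termgraphmonad} shows that both collapse to the \emph{same} recursively-defined map, so the third equation holds \emph{automatically} once $u = 1_A$ and $f.\phi = \pi_1$. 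Finally, for the morphism clause, I would simply unwind the condition $P(h,k).? = ?.k$ (the $R_F$-algebra morphism condition, i.e.\ $h.p = p'.P(h,k)$) through the recursive definitions and read off that, on the operation component, it says precisely $h.\phi = \phi'.(k \times Fh)$.

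The main obstacle I anticipate is the verification that the third $R_F$-algebra axiom becomes vacuous: this is where one must carefully track the recursive action of $P(p,1_B)$ and the comultiplication-like map $\pi_f$ against the recursion defining $p$ from $\phi$, and confirm that no residual constraint survives. Unlike the first two axioms, which cut down the data, this one must be shown to impose \emph{nothing} new, so the argument has to exhibit both sides as literally the same initial-algebra-induced map rather than merely as maps agreeing under some hypothesis. All the verifications are of the ``two maps out of an initial algebra agree because they satisfy the same recursion'' type, so once the bookkeeping of the recursive clauses is set up correctly the computations are mechanical; the only genuine care needed is in this vanishing step.
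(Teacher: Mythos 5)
Your proposal rests on the opening claim that ``a map $p \colon Pf \to A$ is exactly an $F$-algebra homomorphism from the initial algebra to some algebra structure on $A$'', so that giving $p$ is the same as giving a pair $[u,v] \colon A + B \times FA \to A$. This is false: initiality says that each algebra structure on $A$ induces a \emph{unique} map out of $Pf$, not that every map out of $Pf$ arises this way (nor that the inducing structure is unique when it does). An arbitrary $p \colon Pf \to A$ need not satisfy $p.\iota_f = [u,v].(A + B \times Fp)$ for any $[u,v]$. This error then propagates to your treatment of the third $R_F$-algebra axiom, which you claim is vacuous. In fact that axiom is exactly where the content lives: precomposing $p.P(p,1_B) = p.\pi_f$ with $\theta_{\rr_f}.(B \times F\ll_{\rr_f})$ and using $p.\ll_f = 1_A$ yields $p.\theta_f = p.\theta_f.(B \times F(\ll_f.p))$, which is precisely the statement that $p$ is a homomorphism into $[1_A, \phi]$ with $\phi \defeq p.\theta_f.(B \times F\ll_f)$. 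Without this axiom the correspondence fails, and the ``routine induction'' you invoke to show both sides of the third axiom collapse to the same recursively defined map cannot even be set up for the forward direction, since an induction of the form ``two maps out of an initial algebra agree'' requires both to be homomorphisms into a common algebra --- which is the very thing at issue. Your reading of the first two axioms (pinning the leaf clause to $1_A$, and extracting $f.\phi = \pi_1$) is correct, but the overall argument is circular where it matters.

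For comparison, the paper's proof simply exhibits the two assignments: from $p$ it forms $\phi = p.\theta_f.(B \times F\ll_f)$, and from $\phi$ it defines $p$ as the unique map induced by the algebra $[1_A,\phi] \colon A + B \times FA \to A$, leaving the verifications to the reader. The nontrivial part of those verifications is showing that an arbitrary $R_F$-algebra $p$ coincides with the map recursively induced by its associated $\phi$ --- and that is where the third axiom is consumed, not discarded. A corrected version of your argument would first derive the homomorphism property of $p$ from the third axiom as above, and only then perform the bookkeeping you describe.
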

\begin{proof}
For an $R_F$-algebra $p \colon Pf \to A$, the corresponding map $\phi \colon B
\times FA \to A$ over $B$ is given by the composite
\begin{equation*}
    B \times FA \xrightarrow{B \times F\ll_f} B \times FPf \xrightarrow{\theta_f} Pf \xrightarrow{p} A\ \text.
\end{equation*}
Conversely, for a map $\phi \colon B \times FA \to A$ over $B$, the
corresponding $R_F$-algebra structure $p \colon Pf \to A$ is obtained as the
unique map making the square
\begin{equation*}
    \cd[@C+4em]{
    A + B \times FPf \ar[r]^{A + B \times Fp} \ar[d]_{\iota_f} &
     A + B \times FA \ar[d]^{[1_A, \phi]} \\
    Pf \ar[r]_{p} & A
    }
\end{equation*}
commute. The remaining verifications are straightforward.
\end{proof}
Just as the comonad $L_F$ induces a comonad on each coslice category $A / \E$,
so $R_F$ induces a monad on each slice category $\E / B$. In particular, when
$B = 1$, we obtain the monad on $\E$ whose underlying assignation on objects is
given by $A \mapsto \mu X. A + FX$. We recognise this as the \emph{free monad}
on the endofunctor $F$, which is characterised by the property that its
category of algebras is canonically isomorphic to the category of $F$-algebras
in the sense of Definition~\ref{def:initialalg}. The monad $R_F$ may be seen as
a generalisation of this, with the preceding Proposition being the
corresponding generalisation of the universal property of the free monad.

We now give the promised dualisation of the constructions of Definitions~\ref{defn:termgraphcomonad} and \ref{def:termgraphmonad}.
\begin{Defn}
Let $\E$ be a category with finite products and coproducts, and $F \colon \E \to \E$
an endofunctor such that for all $A, B \in \E$, the endofunctor $B \times (A + F\thg)$ 
admits a final coalgebra. Then we define the \emph{cyclic term graph comonad} 
to be $\bar L_F \defeq (R_{F^\op})^\op$, and the \emph{cyclic term graph monad} to be 
to be $\bar R_F \defeq (L_{F^\op})^\op$.
\end{Defn}

Before going on, let us extract an explicit description of the comonad $\bar
L_F$. Given an object $f \colon A \to B$ of $\E^\too$, we write $\bar P f$ for
the final coalgebra of $X \mapsto B \times (A +FX)$, write
\begin{equation*}
    {\bar \iota}_f = ({\bar \rr}_f, {\bar s}_f) \colon \bar Pf \to B \times (A + F\bar Pf)
\end{equation*}
for its coalgebra structure, and write $\bar \ll_f \colon A \to \bar Pf$
for the unique map making the square
\begin{equation*}
    \cd[@C+4em]{
    A \ar[d]_{(f, \mathsf{inl})} \ar[r]^{{\bar \ll}_f} &
    \bar Pf \ar[d]^{{\bar \iota}_f} \\
    B \times (A + FA) \ar[r]_{B \times (A + F\bar \ll_f)} &
     B \times (A + F\bar Pf)
    }
\end{equation*}
commute. We now define $\bar L_F$ on objects by $\bar L_F(f \colon A \to B)
\defeq (\bar \ll_f \colon A \to Pf)$. We define its action on morphisms $(h, k) \colon f \to g$
of $\E^\too$ by
\begin{equation*}
    \bar L_F\left(\cd{ A \ar[r]^h \ar[d]_f & C \ar[d]^g \\ B \ar[r]_k & D}\right)
\qquad = \qquad  \cd{ A \ar[d]_{\ll_f} \ar[r]^h & C \ar[d]^{\ll_g} \\ \bar Pf \ar[r]_-{\bar P(h, k)} & \bar Pg}
\end{equation*}
where $\bar P(h, k)$ is defined by universality of $\bar Pg$ as the unique map
making
\begin{equation*}
    \cd[@C+4em]{
    \bar Pf \ar[r]^{\bar P(h,k)} \ar[d]_{(k \times (h + F\bar P f)). \bar \iota_f} &
     \bar Pg \ar[d]^{\bar \iota_g} \\
    D \times (C + F\bar Pf) \ar[r]_{D \times (C \times F\bar P(h,k))} & D \times (C + F\bar Pg)
    }
\end{equation*}
commute. The natural transformations $\bar L_F \Rightarrow
1$ and $\bar L_F \Rightarrow \bar L_F\bar L_F$ providing the
comonad structure have respective $f$-components given by maps
\begin{equation*}
        \cd{
        A \ar[d]_{\bar \ll_f}\ar[r]^{1_A} & A \ar[d]^{f} \\
        \bar Pf \ar[r]_-{\bar \rr_f} & B
}\qquad \text{and} \qquad
    \cd{
        A \ar[r]^-{1_A}\ar[d]_{\bar \ll_f} & A \ar[d]^{\bar \ll_{\bar \ll_f}} \\
        \bar Pf \ar[r]_-{\bar \sigma_f} & \bar P\bar \ll_f
}
\end{equation*} in $\E^\too$, where $\bar \rr_f$ is defined as above, and $\bar \sigma_f$ is defined by universality of
$\bar P \bar \ll_f$ as the unique map rendering commutative the square:
\begin{equation*}
    \cd[@C+4em]{
        \bar Pf \ar[r]^{\bar \sigma_f} \ar[d]_{(1, \bar \theta_f)}&
        \bar P\bar \ll_f \ar[d]^{\bar \iota_{\bar \ll_f}} \\
        \bar Pf \times (A + F\bar Pf) \ar[r]_-{\bar Pf \times (A + F\bar \sigma_f)} &
        \bar Pf \times (A + F\bar P\bar \ll_f)\rlap{ .}
    }
\end{equation*}

\begin{Defn}\label{def:cyctermgraph}
For a category $\E$ with finite products and coproducts, and an endofunctor $F
\colon \E \to \E$ such that each $B \times (A + F(\thg))$ has a final
coalgebra, we define the category $\cat{ATG}_\infty(F)$ of \emph{cyclic
abstract term graphs over $F$} to be the category of $\bar L_F$-coalgebras.
\end{Defn}
By the dual of Proposition~\ref{prop:rfcharacterise}, the category
$\cat{ATG}_\infty(F)$ is isomorphic to the category whose objects are pairs $(f
\colon A \to B,\, s \colon B \to A + FB)$ for which $s.f = \mathsf{inl}$, and
whose morphisms $(f, s) \to (g, s')$ are maps $(h, k) \colon f \to g$ in
$\E^\too$ for which $s'.k = (h + Fk).s$; which is almost precisely the
definition of cyclic term graph given in~\cite{Barendregt1987Term}. Note that
if we also took this as our definition of cyclic abstract term graphs, then it
would make sense under much weaker hypotheses than those of
Definition~\ref{def:cyctermgraph}: it is enough that $\E$ should have binary
coproducts. Although this extra generality is certainly useful, for the present
paper we shall retain the narrower definition, and this for two reasons:
firstly, to highlight the duality between the cyclic and the acyclic cases; and
secondly, so that later on, when we consider further aspects of the theory, we
can treat these two cases in a uniform manner.

Let us now show that abstract cyclic term graphs are a faithful generalisation
of the concrete ones.
\begin{Prop}\label{prop:equivalencecycatg}
For any signature $\Sigma$, the categories of cyclic concrete term graphs over
$\Sigma$ and of cyclic abstract term graphs over $F_\Sigma$ are equivalent.
\end{Prop}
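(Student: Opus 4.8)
The plan is to build an explicit equivalence in exact parallel with the acyclic case of Proposition~\ref{prop:equivalenceatg}, but now exploiting the simplified description of cyclic abstract term graphs furnished by the dual of Proposition~\ref{prop:rfcharacterise}. Thus I regard an object of $\cat{ATG}_\infty(F_\Sigma)$ not as a $\bar L_{F_\Sigma}$-coalgebra in the literal sense of Definition~\ref{def:cyctermgraph}, but in its equivalent guise as a pair $(f \colon A \to B,\ s \colon B \to A + F_\Sigma B)$ subject to the single equation $s.f = \mathsf{inl}$, with a morphism $(f,s) \to (g,s')$ a commuting square $(h,k) \colon f \to g$ in $\cat{Set}^\too$ satisfying $s'.k = (h + F_\Sigma k).s$. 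The decisive structural point---and the reason the cyclic case is in fact \emph{simpler} than the acyclic one---is that here $s$ records only the \emph{one-step} unfolding $B \to A + F_\Sigma B$ of each node, rather than a map into an initial algebra of finite trees; consequently the inductions on node-rank that pervade the proof of Proposition~\ref{prop:equivalenceatg} are replaced throughout by direct equational verifications.

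First I would define the comparison functor $\Phi \colon \cat{CTG}(\Sigma) \to \cat{ATG}_\infty(F_\Sigma)$, where $\cat{CTG}(\Sigma)$ now carries the cyclic reading of Definition~\ref{def:termgraphconcrete} (that is, the acyclicity clause is dropped). On objects it sends $(A, V, \lbl, \varphi)$ to the pair with underlying map $\mathsf{inl} \colon A \to A + V$ and structure map $s \colon A + V \to A + F_\Sigma(A+V)$ given by $s(\mathsf{inl}(a)) = \mathsf{inl}(a)$ and $s(\mathsf{inr}(v)) = \mathsf{inr}\bigl(\lbl(v),\, (\varphi_1(v), \dots, \varphi_{\abs{\lbl(v)}}(v))\bigr)$; the axiom $s.\mathsf{inl} = \mathsf{inl}$ holds by construction. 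On morphisms it sends $(f,g)$ to the square with top edge $f$ and bottom edge $f+g$, exactly as in~\eqref{eq:imagefg}. Functoriality is immediate.

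Next I would check that $\Phi$ is fully faithful. A morphism $\Phi T \to \Phi T'$ is a pair $(h,k)$ with $k.\mathsf{inl} = \mathsf{inl}.h$ and $s'.k = (h + F_\Sigma k).s$. The square forces $k|_A = \mathsf{inl}.h$; evaluating the second equation on $v \in V$ shows $s'(k(v)) \in F_\Sigma(A'+V')$, whence $k(v) \in V'$, so $k = h + g$ for a unique $g \colon V \to V'$. Comparing the two components of the resulting identity then yields at once $\lbl'(g(v)) = \lbl(v)$ and $(f+g)(\varphi_i(v)) = \varphi'_i(g(v))$ for every $i$, which is precisely the definition of a concrete morphism in Definition~\ref{def:termgraphcat}; conversely any concrete morphism gives such a $(h,k)$. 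In contrast with Proposition~\ref{prop:equivalenceatg}, no induction and no appeal to injectivity of $s'$ is required here, since $s$ and $s'$ are one-step maps and the children of a node are read off directly from the $F_\Sigma$-component.

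Finally, essential surjectivity, which is where I expect the real difficulty to lie. Given $(f,s)$, the equation $s.f = \mathsf{inl}$ again forces $f$ monic, so I set $V \defeq B \setminus \mathrm{im}(f)$ and identify $B \cong A + V$ with $f = \mathsf{inl}$; moreover, unlike the acyclic case, there is now \emph{no} acyclicity to verify, so the rank function $d$ of Proposition~\ref{prop:equivalenceatg} plays no role. The genuinely new phenomenon is this: a non-input node $v \in V$ need not have $s(v) \in F_\Sigma B$, but may instead satisfy $s(v) = \mathsf{inl}(a)$, a bare \emph{indirection} onto the variable $a$. In the acyclic setting the counit axiom $\rr_f.s = 1_B$ excluded such nodes outright, forcing every non-input node to be a genuine operation; here, by the dual of Proposition~\ref{prop:rfcharacterise}, the counit condition is automatic and imposes no restriction, so indirection nodes are admitted. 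Since an indirection node cannot be deleted without altering the cardinality of $B$, it cannot be removed by any isomorphism of term graphs; the honest resolution is therefore not to quotient such nodes away but to recognise that the concrete notion appropriate to the cyclic case is the one admitting variable-nodes among the non-input nodes, in agreement both with the abstract description and with the definition of~\cite{Barendregt1987Term} invoked after Definition~\ref{def:cyctermgraph}. Under that reading each $v \in V$ with $s(v) = \mathsf{inl}(a)$ is simply a variable-node pointing at $a$, essential surjectivity becomes immediate, and $\Phi$ is the sought equivalence. Pinning down the concrete definition so that the correspondence holds on the nose---rather than merely up to absorbing indirections---is the one delicate point that has no analogue in the acyclic argument, and is what I would expect to occupy the bulk of the proof.
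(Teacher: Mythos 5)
Your construction is the paper's own: the same comparison functor sending $T = (A, V, \lbl, \varphi)$ to $\mathsf{inl} \colon A \to A+V$ equipped with the structure map $A + \ell$ (where $\ell \colon V \to F_\Sigma(A+V)$ packages $\lbl$ and $\varphi$), read through the dual of Proposition~\ref{prop:rfcharacterise}; the paper writes down exactly this functor and then declares the remaining verifications ``entirely analogous \dots\ (though simpler) and hence omitted''. Your functoriality and full-faithfulness arguments are correct, and your explanation of \emph{why} they are simpler than in Proposition~\ref{prop:equivalenceatg}---one-step structure maps replace maps into an initial algebra, so the inductions on $c(v)$ and the appeal to injectivity of $s'$ disappear---is exactly right.

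The problem is essential surjectivity, and there you have put your finger on a genuine gap rather than closed one. Your observation is correct: the only condition on an object $(f \colon A \to B,\ s \colon B \to A + F_\Sigma B)$ is $s.f = \mathsf{inl}$, so nothing forbids $s(b) = \mathsf{inl}(a)$ for $b \notin \mathrm{im}(f)$. Such ``indirection'' objects exist for every signature (take $A = \{a\}$, $B = \{a,v\}$, $s(v) = \mathsf{inl}(a)$), and since isomorphisms of coalgebras are pairs of bijections intertwining the structure maps, the property of possessing such a node is isomorphism-invariant; hence these objects lie outside the essential image of $\Phi$ when $\cat{CTG}_\infty(\Sigma)$ is Definition~\ref{def:termgraphconcrete} with acyclicity dropped. (For $\Sigma = \emptyset$ the mismatch cannot be repaired by any cleverer functor: $\cat{CTG}_\infty(\emptyset) \cong \cat{Set}$, while $\cat{ATG}_\infty(F_\emptyset)$ is the category of split monomorphisms with chosen retraction, which has rigid objects that are neither initial nor terminal.) But your resolution---enlarging the concrete category to admit variable-nodes among the internal nodes---proves a modified statement rather than the stated one, and the claim that this enlargement agrees with~\cite{Barendregt1987Term} is doubtful: there the unlabelled nodes \emph{are} the variables, not pointers to them, which is presumably why the paper itself says only that the abstract description is ``almost precisely'' Barendregt's. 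So as a proof of Proposition~\ref{prop:equivalencecycatg} as written your proposal is incomplete; what it does, and does correctly, is expose the gap hidden in the paper's omitted details. A complete treatment must either enlarge the concrete notion as you propose, or restrict the abstract side to those coalgebras without indirection nodes.
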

\begin{proof}
The method of proof is the same as Proposition~\ref{prop:equivalenceatg}: we
define a functor $F \colon \cat{CTG}_\infty(\Sigma) \to
\cat{ATG}_\infty(F_\Sigma)$ and show it to be an equivalence. On objects, given
a cyclic concrete term graph $T = (A, V, \lbl, \varphi)$, we observe that
$\lbl$ and $\varphi$ together determine a morphism $\ell \colon V \to
F_\Sigma(A)$; so that we may take $F(T)$ to be the $\bar L_F$-coalgebra whose
underlying object in $\cat{Set}^\too$ is $\mathsf{inl} \colon A \to A + V$, and
whose coalgebra structure corresponds under the isomorphism of
Proposition~\ref{prop:rfcharacterise} to the map $A + \ell \colon A + V \to A +
F_\Sigma(A + V)$. The remaining details are entirely analogous to
Proposition~\ref{prop:equivalenceatg} (though simpler) and hence omitted.

(Observe that when we apply Proposition~\ref{prop:rfcharacterise} here, we are
really doing something quite familiar. Turning the map $A + \ell$ into an $\bar
L_{F_\Sigma}$-coalgebra structure on $\mathsf{inl} \colon A \to A + V$
corresponds to taking the concrete cyclic term graph $T$ and unfolding it into
a possibly-infinite labelled tree.)
\end{proof}

It is quite straightforward to see that for each of the more general examples
discussed in Section~\ref{sec:generalisations}, applying the dual construction
yields an appropriate notion of cyclic term graph.

\section{Interpretation}\label{sec:interpretation}
Now that we have good abstract notions of both acyclic and cyclic term graph,
we wish to develop further aspects of their theory. In this section, we discuss
how to interpret term graphs in a suitable semantic domain; whilst in the next,
we shall discuss how abstract term graphs may be composed. In order to give a
uniform treatment of both kinds of term graph, we shall describe a general
structure of which both are particular instances; so that by framing subsequent
results in terms of this general structure, we may deal with both cases
simultaneously. The structure in question is not an \emph{ad hoc} one, but one
of importance in abstract homotopy theory and category theory.
\begin{Defn}
A \emph{natural weak factorisation
system}~\cite{Grandis2006Natural,Garner2009Understanding} on a category $\E$ is
given by the assignation of a factorisation
\begin{align}
\label{eq:functfact}   \cd{A \ar[r]^f & B} \qquad \ &\mapsto \,\ \qquad \cd{ A \ar[r]^{\ll_f} & Pf \ar[r]^{\rr_f} & B }
\intertext{to every morphism of $\E$; a factorisation}
\label{eq:functsquare}  \cd{ A \ar[r]^f \ar[d]_h & B \ar[d]^k \\ C \ar[r]_g & D}
\qquad &\mapsto \qquad  \cd{ A \ar[r]^{\ll_f} \ar[d]_h & Pf \ar[r]^{\rr_f} \ar[d]^{P(h, k)} & B \ar[d]^k \\ C \ar[r]_{\ll_g} & Pg \ar[r]_{\rr_g} & D}
\end{align}
to every commutative square of $\E$, functorial in $(h, k)$; and for each $f
\colon A \to B$ in $\E$, choices of maps $    \sigma_f \colon Pf \to
P\ll_f$ $\pi_f \colon P\rr_f \to Pf$ such that:
\begin{itemize}
\item There is a comonad $(L, \epsilon, \Delta)$ on $\E^\too$ with $Lf =
    \ll_f$, with $\epsilon_f = (1, \rr_f) \colon \ll_f \to f$ and
    with $\Delta_f = (1, \sigma_f) \colon \ll_f \to
    \ll_{\ll_f}$.
\item There is a monad $(R, \eta, \mu)$ on $\E^\too$ with $Rf = \rr_f$,
    with $\eta_f = (\ll_f, 1) \colon f \to \rr_f$ and with $\mu_f =
    (\pi_f, 1) \colon \rr_{\rr_f} \to \rr_f$.
\item There is a distributive law $\gamma \colon LR \Rightarrow RL$ whose
    component at $f$ is given by $\gamma_f = (\sigma_f, \pi_f) \colon
    \ll_{\rr_f} \to \rr_{\ll_f}$.
\end{itemize}
\end{Defn}
The notion of natural weak factorisation system is a strengthening of Quillen's
notion of \emph{weak factorisation system}~\cite{Quillen1967Homotopical}, which
has found use in computer science in the open map approach to bisimulation
of~\cite{Joyal1996Bisimulation,Cattani2005Profunctors}.

\begin{Prop}\label{prop:havewfs}
For any category $\E$ with finite products and coproducts, and any endofunctor
$F \colon \E \to \E$ for which each initial algebra $\mu X. A + B \times FX$
exists, the monad-comonad pair $(L_F, R_F)$ on $\E^\too$ is a natural weak
factorisation system.
\end{Prop}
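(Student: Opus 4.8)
The plan is to observe that almost all of the required data is already in hand, so that the one genuinely new ingredient---the distributive law---is all that demands work, and that it too reduces, as everywhere else in this development, to the unicity of maps out of an initial algebra. The functorial factorisations \eqref{eq:functfact}--\eqref{eq:functsquare} are provided by the assignments $f \mapsto (\ll_f, \rr_f)$ and $(h,k) \mapsto P(h,k)$ of Definition~\ref{defn:termgraphcomonad}, the identity $\rr_f . \ll_f = f$ following on precomposing the defining square for $\rr_f$ with the coproduct summand $A \hookrightarrow A + B \times FPf$; functoriality in $(h,k)$ is part of the content of Propositions~\ref{prop:Lcomonad} and~\ref{prop:Rmonad}. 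The comonad $(L_F, \epsilon, \Delta)$ with $\epsilon_f = (1_A, \rr_f)$ and $\Delta_f = (1_A, \sigma_f)$ is precisely Proposition~\ref{prop:Lcomonad}, and the monad $(R_F, \eta, \mu)$ with $\eta_f = (\ll_f, 1_B)$ and $\mu_f = (\pi_f, 1_B)$ is precisely Proposition~\ref{prop:Rmonad}; in both cases the components already have exactly the shape demanded by the definition of a natural weak factorisation system.

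It remains to exhibit the distributive law $\gamma$ with $\gamma_f = (\sigma_f, \pi_f) \colon \ll_{\rr_f} \to \rr_{\ll_f}$, and here I would proceed in stages. First, well-definedness: that $(\sigma_f, \pi_f)$ is a genuine morphism of $\E^\too$ amounts to the commuting square $\rr_{\ll_f} . \sigma_f = \pi_f . \ll_{\rr_f}$, and I would dispatch this by noting that \emph{both} composites equal $1_{Pf}$---the left one by the counit law $(\epsilon L).\Delta = 1$ of $L_F$ read in its codomain component, the right one by the unit law $\mu.(\eta R) = 1$ of $R_F$ read in its domain component. Next, naturality of $\gamma$: its two component equations $P(h, P(h,k)).\sigma_f = \sigma_g.P(h,k)$ and $P(h,k).\pi_f = \pi_g.P(P(h,k),k)$ are nothing but the naturality squares of $\Delta$ and of $\mu$, hence already available.

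Finally, the four coherence axioms of the mixed distributive law. The two unit-type axioms $\gamma.L\eta = \eta L$ and $R\epsilon.\gamma = \epsilon R$ I would again reduce to facts in hand: unwinding them componentwise yields only the equations $\sigma_f.\ll_f = \ll_{\ll_f}$, $\pi_f.P(\ll_f,1_B) = 1$, $P(1_A,\rr_f).\sigma_f = 1$ and $\rr_f.\pi_f = \rr_{\rr_f}$, each of which is either the validity of one of the structural $2$-cells $\Delta_f$, $\mu_f$ as a morphism of $\E^\too$, or an instance of a (co)unit law from Propositions~\ref{prop:Lcomonad} and~\ref{prop:Rmonad}. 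This leaves the two multiplication-type axioms, namely compatibility of $\gamma$ with the comonad comultiplication ($R\Delta.\gamma = \gamma L . L\gamma . \Delta R$) and with the monad multiplication ($\mu L . R\gamma . \gamma R = \gamma.L\mu$). These are the substance of the proof: each is an equality of morphisms of $\E^\too$ whose domain- and codomain-components are maps out of the initial algebras $Pf$, $P\rr_f$ and $P\rr_{\rr_f}$, and each such component-equation I would settle by showing the two sides to be algebra homomorphisms determined by the same recursion, invoking the defining diagrams of $\sigma_f$ and $\pi_f$ (together with the auxiliary grafting map $\kappa_f$). I expect this final bookkeeping---correctly threading the nested occurrences of $P(\thg,\thg)$, $\sigma$, $\pi$ and $\kappa$ through the two remaining axioms---to be the only real obstacle; conceptually there is nothing beyond the uniqueness principle already exploited in Propositions~\ref{prop:Lcomonad} and~\ref{prop:Rmonad}.
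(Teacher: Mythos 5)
Your proposal is correct and follows essentially the same route as the paper: the comonad and monad structures are already supplied by Propositions~\ref{prop:Lcomonad} and~\ref{prop:Rmonad}, so the only remaining content is the distributive law $\gamma_f = (\sigma_f, \pi_f)$, whose axioms reduce to identities already in hand plus a final verification by the unicity of maps out of the initial algebras $Pf$, $P\rr_f$ and $P\rr_{\rr_f}$. Your componentwise unwinding of the well-definedness, naturality and unit-type axioms is a correct (and more explicit) elaboration of what the paper dismisses as routine.
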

\begin{proof}
All that remains is to exhibit the required distributive law $\gamma$. Since we
already have the necessary data, we need only check the corresponding axioms,
which we may do through a further straightforward manipulation using the
universal property of an initial algebra.
\end{proof}
By duality, we immediately obtain:
\begin{Cor}\label{prop:havewfs2} For any
category $\E$ with finite products and coproducts, and any endofunctor $F \colon
\E \to \E$ for which each final coalgebra $\nu X. B \times (A + FX)$ exists,
the monad-comonad pair $(\bar L_F, \bar R_F)$ on $\E^\too$ is a natural weak
factorisation system.
\end{Cor}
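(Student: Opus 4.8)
The plan is to derive the statement from Proposition~\ref{prop:havewfs} by passing to opposite categories, exploiting the stability of the notion of natural weak factorisation system under dualisation. Concretely, I would apply Proposition~\ref{prop:havewfs} not to $\E$ itself but to the opposite category $\E^\op$ together with the endofunctor $F^\op \colon \E^\op \to \E^\op$. Since finite products and coproducts in $\E^\op$ are respectively the finite coproducts and products of $\E$, the category $\E^\op$ again has finite products and coproducts; and the endofunctor $A + B \times F^\op(\thg)$ formed in $\E^\op$ is, read back in $\E$, the endofunctor $A \times (B + F(\thg))$, whose initial algebra in $\E^\op$ is precisely the final coalgebra $\nu X. A \times (B + FX)$ in $\E$. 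Relabelling $A$ and $B$, these are exactly the final coalgebras $\nu X. B \times (A + FX)$ whose existence is hypothesised, so Proposition~\ref{prop:havewfs} applies to $(\E^\op, F^\op)$ and yields a natural weak factorisation system $(L_{F^\op}, R_{F^\op})$ on $\E^\op$.

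Next I would transport this structure back to $\E$ along the canonical isomorphism $(\E^\op)^\too \cong (\E^\too)^\op$ induced by $\too^\op \cong \too$. Under this isomorphism the comonad $L_{F^\op}$ becomes a monad $(L_{F^\op})^\op$ on $\E^\too$ and the monad $R_{F^\op}$ becomes a comonad $(R_{F^\op})^\op$ on $\E^\too$, with their counit/comultiplication and unit/multiplication data exchanging roles; by the definitions of the cyclic term graph comonad and monad these opposites are exactly $\bar R_F$ and $\bar L_F$. The distributive law would then be obtained by dualising $\gamma \colon L_{F^\op}R_{F^\op} \Rightarrow R_{F^\op}L_{F^\op}$ to the natural transformation $\gamma^\op \colon (R_{F^\op})^\op(L_{F^\op})^\op \Rightarrow (L_{F^\op})^\op(R_{F^\op})^\op$, which has exactly the shape $\bar L_F \bar R_F \Rightarrow \bar R_F \bar L_F$ demanded of a distributive law of the comonad $\bar L_F$ over the monad $\bar R_F$.

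The step I expect to be the main obstacle is verifying that this dualisation really does produce a natural weak factorisation system: that the comonad and monad axioms, and above all the coherence axioms linking $\gamma$ to the comonad and monad structures, are all preserved under taking opposites. This amounts to establishing that the notion of natural weak factorisation system is self-dual---a purely formal matter, since each defining axiom is carried to another such axiom under reversal of $2$-cells. Once this is in place, the identifications $(R_{F^\op})^\op = \bar L_F$ and $(L_{F^\op})^\op = \bar R_F$ make the conclusion that $(\bar L_F, \bar R_F)$ is a natural weak factorisation system immediate.
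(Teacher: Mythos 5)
Your proposal is correct and is exactly the argument the paper intends: the paper dispatches this corollary with the single phrase ``by duality,'' and your write-up just makes explicit the application of Proposition~\ref{prop:havewfs} to $(\E^\op, F^\op)$, the transport along $(\E^\op)^\too \cong (\E^\too)^\op$, and the self-duality of the notion of natural weak factorisation system. The identifications $(R_{F^\op})^\op = \bar L_F$, $(L_{F^\op})^\op = \bar R_F$ and the reversal of $\gamma$ are all handled correctly.
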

The two preceding results are more than a convenient framing device for our
subsequent development: they actually guide that development, by allowing us to
apply aspects of the theory of natural weak factorisation systems to the study
of term graphs. For our first such application, we derive a notion of
\emph{interpretation} for abstract term graphs, using the following basic
result from the theory of natural weak factorisation systems:
\begin{Prop}[``Lifting'']\label{prop:lifting} If $(L, R)$ is a natural weak factorisation system on a
category $\E$, then for any commutative square
\begin{equation*}
    \cd{
      A \ar[r]^h \ar[d]_f & C \ar[d]^g \\
      B \ar[r]_k & D
    }
\end{equation*}
in $\E$, any $L$-coalgebra structure on $f$ and any $R$-algebra structure on
$g$, there is a canonical choice of morphism $j \colon B \to C$ such that $gj =
k$ and $jf = h$.
\end{Prop}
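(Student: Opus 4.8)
The plan is to recognise this as the standard lifting construction for weak factorisation systems and to execute it by unwinding the coalgebra and algebra data into bare maps of $\E$ and then performing a short diagram chase. First I would unwind the two structures. An $L$-coalgebra structure on $f \colon A \to B$ is a morphism $f \to \ll_f$ in $\E^\too$; since the counit is $\epsilon_f = (1_A, \rr_f)$, the counit law forces the domain-component of this morphism to be $1_A$, so the datum amounts to a single map $s \colon B \to Pf$ subject to $\rr_f . s = 1_B$ (from the counit law) and $s . f = \ll_f$ (from commutativity of the defining square in $\E^\too$). Dually, an $R$-algebra structure on $g \colon C \to D$ amounts, by the monad unit $\eta_g = (\ll_g, 1_D)$, to a single map $p \colon Pg \to C$ subject to $p . \ll_g = 1_C$ and $g . p = \rr_g$.

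Next I would invoke functoriality of the factorisation. Applying~\eqref{eq:functsquare} to the given square $(h, k) \colon f \to g$ produces a map $P(h, k) \colon Pf \to Pg$ satisfying $P(h,k) . \ll_f = \ll_g . h$ and $\rr_g . P(h,k) = k . \rr_f$. With $s$, $p$ and $P(h,k)$ to hand I would define the lift to be the composite $j \defeq p . P(h,k) . s \colon B \to C$, which is manifestly canonical, being assembled solely from the given coalgebra and algebra structures and the functorial action of the factorisation. The two required equations then follow by direct substitution: one computes $g . j = \rr_g . P(h,k) . s = k . \rr_f . s = k$ using $g.p = \rr_g$, the second functoriality identity, and $\rr_f . s = 1_B$; and $j . f = p . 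P(h,k) . \ll_f = p . \ll_g . h = h$ using $s.f = \ll_f$, the first functoriality identity, and $p . \ll_g = 1_C$.

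I expect no genuine obstacle here: the whole argument is a three-step chase, and it is worth remarking that it uses only the ``unit-level'' portion of the natural weak factorisation system, namely the functorial factorisation together with the one-sided inverses and commuting squares extracted from the comonad counit $\epsilon$ and the monad unit $\eta$. The comultiplication $\sigma_f$, the multiplication $\pi_f$, the (co)associativity axioms, and the distributive law $\gamma$ are never invoked. The only step that demands any care is the initial unwinding, where one must read off $s$ and $p$ and the precise identities they satisfy from the definitions of $\epsilon$, $\eta$ and of morphisms in $\E^\too$; once that is done correctly, the verification is immediate.
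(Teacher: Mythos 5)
Your proposal is correct and is exactly the paper's argument: the lift is the same composite $p \cdot P(h,k) \cdot s$ of~\eqref{eq:liftcomposite}, with your diagram chase simply spelling out the verification that the paper leaves implicit.
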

\begin{proof}
The $L$-coalgebra structure on $f$ is given by a morphism $s \colon B \to Pf$
satisfying axioms; likewise, the $R$-algebra structure on $g$ is given by a map
$p \colon Pg\to C$. We may therefore take $j$ to be the composite
\begin{equation}\label{eq:liftcomposite}
    B \xrightarrow{s} Pf \xrightarrow{P(h,k)} Pg \xrightarrow{p} C\ \text. \qedhere
\end{equation}
\end{proof}
Let us see how this pertains to term graphs. We shall specialise
Proposition~\ref{prop:lifting} to the particular case where $D = 1$, and apply
it first to the acyclic situation of Proposition~\ref{prop:havewfs}, and then
to the cyclic situation of Corollary~\ref{prop:havewfs2}. In the former case,
the basic data we have is a diagram
\begin{equation}\label{eq:interpretationbasicdata}
    \cd{A \ar[d]_{f} \ar[r]^h & C \\ B}
\end{equation}
where $f$ is an $L_F$-coalgebra---hence an acyclic term graph over $F$---and
the unique map $C \to 1$ is an $R_F$-algebra; which by the discussion following
Proposition~\ref{prop:rfcharacterise}, is equally well to say that $C$ bears an
$F$-algebra structure $c \colon FC \to C$. Now the object $A$ is the object of
free variables of the acyclic term graph $f$; and so to give the map $h$ is to
give an interpretation of these variables in the $F$-algebra $C$. The canonical
map $j \colon B \to C$ whose existence is assured by
Proposition~\ref{prop:lifting} extends this to an interpretation of \emph{all}
nodes of the given term graph in $C$, and does so using the $F$-algebra
structure in the obvious manner. In fact, by unwinding the definitions
in~\eqref{eq:liftcomposite}, we find that the map $j$ specified there is
obtained by composing the coalgebra map $s \colon B \to Pf$ with the map
$\mathsf{ev} \colon Pf \to C$ obtained by universality in
\begin{equation*}
    \cd[@C+3.5em]{
        A + B \times FPf \ar[r]^{A + B \times F(\mathsf{ev})} \ar[d]_{\iota_f} &
        A + B \times FC \ar[d]^{[h, c\pi_2]} \\
        Pf \ar[r]_{\mathsf{ev}} & C\rlap{ ;}
    }
\end{equation*}
which is easily seen to agree with the natural notion of interpretation we
would give in the concrete situation. Let us now consider the cyclic case. This
time, our basic data are a diagram of the
form~\eqref{eq:interpretationbasicdata}, an $\bar L_F$-coalgebra structure on
$f$---making it into a cyclic term graph---and an $\bar R_F$-algebra structure
on $C \to 1$. Now, to give the latter is equally well to equip $C$ with an
algebra structure for the monad obtained on $\E \cong \E / 1$ by restricting
and corestricting $R_F$ to those objects of $\E^\too$ with codomain $1$. The
monad in question is the one whose underlying assignation on objects is given
by
\begin{equation*}
    A \mapsto \nu X. A + FX\ \text;
\end{equation*}
it has been studied carefully in~\cite{Aczel2003Infinite,Milius2005Completely},
where it is called the \emph{free completely iterative monad} on $F$. Its
algebras are called \emph{completely iterative $F$-algebras}, and are
characterised as being those $F$-algebras with the property that every system
of guarded recursive equations over $F$ has a solution. Without going into the
details of this let us merely say that this is precisely what is captured by
our notion of interpretation. We may regard the cyclic term graph $f \colon A
\to B$ as a system of guarded recursive equations over $F$ with constants in
the set $A$. The map $h \colon A \to C$ indicates how to interpret the
constants of the recursive equations in the completely iterative $F$-algebra
$C$; whilst the extension to a map $j \colon B \to C$ provides the
corresponding solution.

\section{Composition}\label{sec:composition}
In this final section, we shall show that our abstract term graphs admit an
operation of \emph{composition}, which chains the results of computation from
one term graph to another. In order to perform such a chaining, an extra datum
is required, indicating how the free variables of the second term graph should
be filled by values from the computation of the first. As presaged in the
discussion following Definition~\ref{def:termgraphcat}, we shall determine this
extra datum by considering term graphs equipped with a distinguished collection
of ``output nodes''.
\begin{Defn}\label{def:termgraphfrom}
Let $F \colon \E \to \E$ be an endofunctor to which the construction of
Section~\ref{sec:definition} (respectively, its dual) applies. An
\emph{acyclic} (respectively, \emph{cyclic}) \emph{term graph from $A$ to $B$}
over $F$ is a cospan
\begin{equation*}
    \cd{A \ar[dr]_f & & B \ar[dl]^g \\ & X}
\end{equation*}
together with an $L_F$- (respectively, $\bar L_F$-) coalgebra structure  on
$f$.
\end{Defn}
In the concrete case, we see by Proposition~\ref{prop:equivalenceatg} that
acyclic or cyclic term graphs from $A$ to $B$ over a signature endofunctor
$F_\Sigma$ correspond to concrete term graphs $T = (A, V, \lbl, \varphi)$
equipped with a function $g \colon B \to A + V$. In that case, we may compose
such a pair $(T, g) \colon A \to B$ with another pair $(T', g') \colon B \to
C$, to obtain the pair $(T'\circ T, h) \colon A \to C$ given as follows.
\begin{itemize}
\item The set of input nodes of $T' \circ T$ is $A$ (as it must be);
\item The set of internal nodes is $V + V'$;
\item The labelling function is $[\lbl, \lbl'] \colon V + V' \to
    \Sigma$;
\item The children of an element $v \in V$ are given by $\varphi_i(v)$;
\item The children of an element $v' \in V'$ are given by
    \begin{equation*}
        \psi_i(v') = \begin{cases} \varphi'_i(v) & \text{if $\varphi'_i(v) \in V'$}\\
        g(\varphi'_i(v)) & \text{if $\varphi'_i(v) \in B$}
        \end{cases}
    \end{equation*}
\item The function $h \colon C \to A + V + V'$ is given by
\begin{equation*}
  h(c) =
        \begin{cases} g'(c) & \text{if $g'(c) \in V'$}\\
        g(g'(c)) & \text{if $g'(c) \in B$\ \text.}\end{cases}
\end{equation*}
\end{itemize}
In the acyclic case, the required acyclicity of the composite follows from that
of the two parts and a case analysis. What we shall now do is provide an
abstract analogue of this composition.
\begin{Prop}\label{prop:sharingtheory}
Let $F \colon \E \to \E$ be an endofunctor to which the construction of
Section~\ref{sec:definition} (respectively, its dual) applies, and let $\E$
have pushouts. Under these hypotheses there is a category $\mathscr{S}[F]$
(respectively $\mathscr S_\infty[F]$) whose objects are those of $\E$ and whose
morphisms $A \to B$ are equivalence classes of acyclic (respectively cyclic)
term graphs from $A$ to $B$.
\end{Prop}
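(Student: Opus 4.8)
The plan is to realise composition of term graphs as composition of cospans by pushout, transporting the coalgebra structure on the left leg along the way; the equivalence relation is then forced to be isomorphism of term graphs fixing the two endpoints, since the pushout — hence the composite — is determined only up to canonical isomorphism. Recall that a term graph from $A$ to $B$ is a cospan $A \xrightarrow{f} X \xleftarrow{g} B$ together with an $L_F$-coalgebra structure on $f$; to compose it with a term graph $B \xrightarrow{f'} X' \xleftarrow{g'} C$ from $B$ to $C$, I would form the pushout $Y$ of $X \xleftarrow{g} B \xrightarrow{f'} X'$ (available since $\E$ has pushouts), take as legs of the composite the maps $A \xrightarrow{f} X \to Y$ and $C \xrightarrow{g'} X' \to Y$, and equip the left leg $A \to Y$ with an $L_F$-coalgebra structure.

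That structure I would obtain in two steps, exploiting that $L_F$ is the comonad half of a natural weak factorisation system by Proposition~\ref{prop:havewfs}. The coalgebras of such a system, being its left class, enjoy two closure properties flowing from the comonad and distributive-law structure $\gamma$: closure under \emph{cobase change} (pushout) and closure under \emph{composition}. Cobase change of the $L_F$-coalgebra $f' \colon B \to X'$ along $g \colon B \to X$ endows the pushout injection $X \to Y$ with an $L_F$-coalgebra structure, and composing this with the given coalgebra $f \colon A \to X$ produces the desired structure on $A \to Y$. Unwinding these operations against the explicit form of $L_{F_\Sigma}$-coalgebras recovers exactly the concrete composite recalled before the Proposition: cobase change along $g$ is the substitution of each input reference $b \in B$ of the second graph by $g(b)$, while the composition step grafts the internal nodes of the first graph beneath those of the second.

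Next I would fix the identities and the equivalence relation. The identity on $A$ is the cospan $A \xrightarrow{1_A} A \xleftarrow{1_A} A$ carrying the trivial empty-internal-node coalgebra $\ll_{1_A}$; composing with it amounts to a pushout along an identity, which returns the other term graph unchanged, so the unit laws hold strictly up to the chosen pushout. I would call two term graphs from $A$ to $B$ \emph{equivalent} when there is an isomorphism of their apices commuting with both legs and with the coalgebra structures, and set $\mathscr S[F](A,B)$ to be the collection of equivalence classes, with composition inherited from the construction above. That composition descends to equivalence classes follows from the functoriality of pushouts together with the naturality of the coalgebra transport, and the unit laws follow from the identity computation just indicated.

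The remaining and principal task is associativity. For composable term graphs $A \to B \to C \to D$, both bracketings compute the apex of the composite as an iterated pushout; by the pasting lemma for pushouts these apices are canonically isomorphic, and the real work is to check that this isomorphism is compatible with the transported coalgebra structures. This is the main obstacle, since it demands the \emph{coherence} of the two closure operations — that cobase change and composition of $L_F$-coalgebras interact associatively — rather than merely their existence. I expect to discharge it either by appealing to the coherence theory of natural weak factorisation systems, or, in the spirit of Propositions~\ref{prop:Lcomonad} and~\ref{prop:Rmonad}, by direct verification using the universal property of the initial algebras $Pf$ defining $L_F$. The cyclic case is entirely parallel, replacing $L_F$ by $\bar L_F$ and Proposition~\ref{prop:havewfs} by Corollary~\ref{prop:havewfs2}, and using that the dual closure properties hold for the left class of the dual natural weak factorisation system.
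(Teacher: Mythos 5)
Your proposal is correct and follows essentially the same route as the paper: composition by pushout of cospans, with the coalgebra structure on the composite left leg obtained by first pushing out the second graph's coalgebra along $g$ and then composing coalgebras, the identity cospan with its unique coalgebra structure, and the same equivalence relation. The paper isolates your two closure properties as standalone Propositions (``$L$-coalgebras push out'' and ``$L$-coalgebras compose''), each with a uniqueness clause, and it is exactly those uniqueness/universal properties that settle the associativity compatibility you flag as the remaining task.
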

The notion of equivalence we use in this Proposition identifies two term graphs
$A \to B$ just when there is an isomorphism $\lambda \colon X \to X'$ making
\begin{equation*}
    \cd[@C+2em]{A \ar[dr]^f \ar[ddr]_{f'} & & B \ar[ddl]^{g'} \ar[dl]_g \\ & X \ar[d]^\lambda \\ & X'}
\end{equation*}
commute, and making the left-hand triangle a map of coalgebras for the
appropriate comonad. The reason for quotienting in this way is that we intend
to define the composition of two cospans $A \rightarrow X \leftarrow B$ and $B
\rightarrow Y \leftarrow C$ by taking it to be the outer edge of the diagram
\begin{equation}\label{eq:pushoutcomp}
    \cd[@!@-1em]{
    A \ar[dr]_f & & B \ar[dl]_g \ar[dr]^h & & C\rlap{ ,}\ar[dl]^k \\
    & X \ar[dr]_{p} & & Y \ar[dl]^{q} \\ & & Z
    }
\end{equation}
wherein the bottom square is a pushout. As it stands, this composition is only
associative up to isomorphism: and to rectify this, we must quotient out as
above. This could be avoided if we were to make $\cat{Cospan}(F)$ into a
\emph{bicategory} rather than a category, but for our purposes, passing to the
quotient seems to be the simplest way to proceed.

The other obstacle to defining the composition as in~\eqref{eq:pushoutcomp}
lies in showing that the given coalgebra structures on $f$ and $h$ induce one
on $pf$. We shall do this in two stages: first we show that ``a pushout of an
coalgebra is a coalgebra''---which gives us a coalgebra structure on $p$ from
the one on $h$---and then we show that ``the composite of two coalgebras is a
coalgebra''---which gives us the one on $pf$ from those on $p$ and on $f$. We
may prove these results for the acyclic and the cyclic cases simultaneously, as
they are completely general facts about natural weak factorisation systems.

\begin{Prop}[``$L$-coalgebras push out'']\label{prop:coalgpushout}
Let $(L, R)$ be a natural weak factorisation system  on a category $\E$. For
any pushout square
\begin{equation*}
  \cd{A \ar[d]_f \ar[r]^h & C \ar[d]^g \\ B \ar[r]_k & D}
\end{equation*}
and any $L$-coalgebra structure on $f$, there is a unique $L$-coalgebra
structure on $g$ making the square a map of $L$-coalgebras.
\end{Prop}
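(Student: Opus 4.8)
The plan is to construct the required structure on $g$ directly from the universal property of the pushout, and then to verify the coalgebra axioms by testing them against the two legs $k$ and $g$. Recall that an $L$-coalgebra structure on $g$ is the same as a map $t \colon D \to Pg$ satisfying $t.g = \ll_g$, $\rr_g.t = 1_D$ and $P(1_C,t).t = \sigma_g.t$; and, writing $s \colon B \to Pf$ for the given structure on $f$, the square $(h,k)$ is a map of coalgebras precisely when $P(h,k).s = t.k$. The first step is to observe that the maps $P(h,k).s \colon B \to Pg$ and $\ll_g \colon C \to Pg$ agree when restricted along $f$ and $h$: indeed $P(h,k).s.f = P(h,k).\ll_f = \ll_g.h$, the first equality using the coalgebra axiom $s.f = \ll_f$ and the second the left-hand naturality square $P(h,k).\ll_f = \ll_g.h$ of the functorial factorisation. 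The pushout therefore induces a \emph{unique} $t \colon D \to Pg$ with $t.k = P(h,k).s$ and $t.g = \ll_g$. Since these are exactly the map-of-coalgebras condition and the first coalgebra axiom, this $t$ is the only possible candidate, which settles uniqueness at the same time.

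It then remains to verify the other two axioms for $t$, and for each it suffices, by the universal property of the pushout, to precompose with $k$ and with $g$. The counit axiom $\rr_g.t = 1_D$ is immediate: along $k$ we compute $\rr_g.t.k = \rr_g.P(h,k).s = k.\rr_f.s = k$, using the right-hand naturality square $\rr_g.P(h,k) = k.\rr_f$ together with $\rr_f.s = 1_B$; whilst along $g$ we have $\rr_g.t.g = \rr_g.\ll_g = g$ straight from the factorisation.

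The real work lies in the comultiplication axiom $P(1_C,t).t = \sigma_g.t$, and this is where I expect the main obstacle to be. Along $g$ both sides collapse to $\ll_{\ll_g}$: the left-hand side using $t.g = \ll_g$ and left-naturality applied to the square $(1_C,t) \colon g \to \ll_g$, and the right-hand side using the comonad identity $\sigma_g.\ll_g = \ll_{\ll_g}$ built into the natural weak factorisation system. Along $k$ the two sides must be brought to a common form. Functoriality of $P$ and the composite square $(1_C,t).(h,k) = (h,t.k)$ give $P(1_C,t).t.k = P(h,t.k).s$ on the one side; on the other, applying in turn the naturality of the comultiplication $\Delta \colon L \To LL$ at $(h,k)$, which in bottom components reads $P(h,P(h,k)).\sigma_f = \sigma_g.P(h,k)$, then the comultiplication axiom $\sigma_f.s = P(1_A,s).s$ for $s$, and finally functoriality of $P$ once more (via the composite square $(h,P(h,k)).(1_A,s) = (h,t.k)$), rewrites $\sigma_g.t.k$ as the same expression $P(h,t.k).s$. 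Hence both sides agree along $k$ as well, and the verification is complete. The delicate point throughout is the careful bookkeeping of composites of commutative squares in $\E^\too$ and the correct invocation of the packaged comonad and naturality laws; once these are handled, the argument is entirely in terms of the abstract structure, so it applies verbatim to the cyclic case through Corollary~\ref{prop:havewfs2}.
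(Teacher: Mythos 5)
Your proof is correct and follows the same route as the paper: induce $t$ from the pushout applied to $\ll_g.h$ and $P(h,k).s$ (which simultaneously gives uniqueness), then verify the counit and comultiplication axioms by precomposing with the two pushout legs. The paper leaves those verifications as "easy by the universal property of pushout"; your careful expansion of them, including the naturality of $\Delta$ at $(h,k)$ and the functoriality of $P$ on composite squares, is exactly the intended argument.
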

\begin{proof}
To give an $L$-coalgebra structure on $f$ is to give a map $s \colon B \to Pf$
satisfying $\rr_f . s = 1_B$, $s . f = \ll_f$ and $P(1_A,s).s = \sigma_f .
s$. We induce a corresponding map $t \colon D \to Pg$ for $g$ by applying the
universal property of pushout to the square
\begin{equation*}
  \cd[@C+2em]{A \ar[d]_f \ar[r]^h & C \ar[d]^{\ll_g} \\ B \ar[r]_{P(h, k).s} & Pg\rlap{ .}}
\end{equation*}
Thus $t$ is the unique map $D \to Pg$ satisfying $t.g = \ll_g$ and $t.k =
P(h,k).s$, and so will be the unique $L$-coalgebra structure on $g$ making
$(h,k)$ into a map of $L$-coalgebras as soon as we have verified the other two
$L$-coalgebra axioms: which is easy by the universal property of pushout.
\end{proof}

\begin{Prop}[``$L$-coalgebras compose'']\label{prop:coalgcompose}
Let $(L, R)$ be a natural weak factorisation system on a category $\E$ and let
$f \colon A \to B$, $g \colon B \to C$ in $\E$. For every choice of
$L$-coalgebra structures on $f$ and $g$, there is a unique compatible
$L$-coalgebra structure on $gf$.
\end{Prop}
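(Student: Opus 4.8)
The plan is to read this as the statement that the left class of the factorisation system is closed under composition, and to witness that closure by an explicit algebraic formula for the composite coalgebra structure, verifying afterwards that it satisfies the coalgebra axioms. Recall that an $L$-coalgebra structure on a map with codomain $Y$ is a single arrow into $PY$; so the data given are maps $s \colon B \to Pf$ and $t \colon C \to Pg$ with $\rr_f.s = 1_B$, $s.f = \ll_f$, $P(1,s).s = \sigma_f.s$, together with the three analogous equations for $t$, and what must be produced is a map $u \colon C \to P(gf)$ satisfying $\rr_{gf}.u = 1_C$, $u.(gf) = \ll_{gf}$ and $P(1_A, u).u = \sigma_{gf}.u$, subject to a compatibility condition made precise at the end.

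For the construction I would lean on the Lifting Proposition~\ref{prop:lifting}. The map $\rr_{gf} \colon P(gf) \to C$ carries its canonical $R_F$-algebra structure $\pi_{gf}$, so I would first lift the coalgebra $(f,s)$ against it along the square $(\ll_{gf}, g) \colon f \to \rr_{gf}$, obtaining $j \defeq \pi_{gf}.P(\ll_{gf}, g).s \colon B \to P(gf)$ with $j.f = \ll_{gf}$ and $\rr_{gf}.j = g$; and then lift $(g,t)$ against $\rr_{gf}$ along the square $(j, 1_C) \colon g \to \rr_{gf}$ to set
\[
    u \ \defeq\ \pi_{gf}.P(j, 1_C).t \colon C \to P(gf)\ \text.
\]
The two triangle equations are then forced: $\rr_{gf}.u = 1_C$ follows from naturality of $\rr$ and the counit axiom for $t$ via the monad identity $\rr_{gf}.\pi_{gf} = \rr_{\rr_{gf}}$, while naturality of $\ll$ and the left unit law of $R_F$ give $u.g = j$, whence $u.(gf) = j.f = \ll_{gf}$. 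I would also record that $\pi_{gf}.P(\ll_{gf}, g) = P(1_A, g)$, which follows by functoriality of $P$ from the factorisation $(\ll_{gf}, g) = (\ll_{gf}, 1_C).(1_A, g)$ and the unit law $\pi_{gf}.P(\ll_{gf}, 1_C) = 1$; this identifies $j$ with $P(1_A,g).s$.

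The main obstacle is the remaining comultiplication axiom $P(1_A, u).u = \sigma_{gf}.u$, and this is precisely where the natural weak factorisation system structure, and in particular the distributive law $\gamma \colon L_F R_F \Rightarrow R_F L_F$ pairing $\sigma_f$ with $\pi_f$, must earn its keep. I expect to reduce this identity, along $u = \pi_{gf}.P(j, 1_C).t$, to the comultiplication axiom $P(1,t).t = \sigma_g.t$ for $t$, using the naturality of $\sigma$ and of $\pi$ in $\E^\too$, the two coherence clauses of the distributive law relating $\sigma_{gf}$ and $\pi_{gf}$ to the $\sigma$ and $\pi$ of the factors, and the comultiplication axiom for $s$ needed to handle the $j$ built out of $s$. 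Because $\sigma$ and $\pi$ are in every instance characterised by the universal property of the factorisation, each step of the chase should be forced; so the verification is routine in the sense of the paper, but it is the one place where all the axioms of a natural weak factorisation system are used at once.

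Finally, I would take ``compatible'' to mean that the two canonical squares $(1_A, g) \colon f \to gf$ and $(f, 1_C) \colon gf \to g$ are maps of $L$-coalgebras, i.e.\ that $P(1_A,g).s = u.g$ and $P(f, 1_C).u = t$; the first is exactly the identification of $j$ above, and the second I would check directly from the definition of $u$ and naturality of $\pi$. Uniqueness then follows since these two conditions together with the coalgebra axioms pin $u$ down: for the signature endofunctors of Section~\ref{sec:equivalence} the structure map is monic and $u$ is literally the substitution of the term graph $f$ into the free variables of $g$ exhibited in Section~\ref{sec:composition}, while in general its value is determined by the universal property of $P(gf)$. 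Since the whole argument refers only to the natural weak factorisation system axioms supplied by Proposition~\ref{prop:havewfs} and Corollary~\ref{prop:havewfs2}, it settles the acyclic and cyclic cases simultaneously.
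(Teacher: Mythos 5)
Your construction is exactly the paper's: your $j = \pi_{gf}.P(\ll_{gf},g).s$ collapses, as you note, to $P(1_A,g).s$, which is the paper's $\xi$, and your $u = \pi_{gf}.P(j,1_C).t$ is literally the paper's composite $\pi_{gf}.P(\xi,1_C).t$; your verification of the two unit-type axioms and your deferral of the comultiplication axiom to a routine chase also match the paper's level of detail. Where you genuinely diverge---and where there is a gap---is in the treatment of compatibility and uniqueness. The paper defines ``compatible'' to mean that for \emph{every} commutative square from $gf$ to an $R$-algebra, the filler obtained from the putative coalgebra structure on $gf$ agrees with the one obtained by lifting twice, first through $f$ and then through $g$. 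Uniqueness is then immediate and fully general: applying this to the square $(\ll_{gf},1_C)\colon gf \to \rr_{gf}$ with $\rr_{gf}$ carrying its free $R$-algebra structure, the one-step filler is $\pi_{gf}.P(\ll_{gf},1_C).u = u$ by the monad unit law, so compatibility forces $u$ to equal the two-step lift, i.e.\ the explicit formula. You instead \emph{redefine} compatibility as the requirement that $(1_A,g)$ and $(f,1_C)$ be coalgebra maps, and then argue uniqueness by appealing to monicity of the structure map and to ``the universal property of $P(gf)$''. Neither is available here: the Proposition concerns an arbitrary natural weak factorisation system, in which $Pf$ is not an initial algebra and the structure maps need not be monic, so your uniqueness argument as written does not go through.

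The gap is repairable by the same trick the paper uses. Since $\xi.f = P(1_A,g).s.f = P(1_A,g).\ll_f = \ll_{gf}$, the squares compose as $(\xi,1_C).(f,1_C) = (\ll_{gf},1_C)$, so the unit law gives $u = \pi_{gf}.P(\ll_{gf},1_C).u = \pi_{gf}.P(\xi,1_C).P(f,1_C).u$; hence your single condition $P(f,1_C).u = t$ already forces $u = \pi_{gf}.P(\xi,1_C).t$. But you should then either show that your two-square condition is equivalent to the paper's universally quantified one (both are equivalent to ``$u$ equals the formula'', once existence is settled), or simply work with the paper's definition of compatibility, since that is the notion the Proposition actually asserts.
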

By a \emph{compatible} $L$-coalgebra structure on $gf$, we mean the following.
By virtue of the given coalgebra structures on $f$ and $g$, we have for any
square
\begin{equation*}
    \cd{
    A \ar[d]_{gf} \ar[r]^h & D \ar[d]^p \\
    C \ar[r]_{k} & E
    }
\end{equation*}
and any $R$-algebra structure on $p$, a choice of filler $j \colon C \to D$
obtained by applying Proposition~\ref{prop:lifting} twice: first with $f$ on
the left, and then with $g$. An $L$-coalgebra structure on $gf$ is compatible
if the preceding choices of fillers agree with those obtained by applying
Proposition~\ref{prop:lifting} once to $gf$.

\begin{proof}
For uniqueness, we observe that any given $L$-coalgebra structure on $gf$ may
be recovered by applying Proposition~\ref{prop:lifting} to the square
\begin{equation*}
    \cd{
        A \ar[r]^-{\ll_{gf}} \ar[d]_{gf} & P(gf) \ar[d]^{\rr_{gf}} \\
        C \ar[r]_{1_C}  & C
    }
\end{equation*}
where $\rr_{gf}$ is given its free $R$-algebra structure. Thus there can be at
most one compatible $L$-coalgebra structure on $gf$, and we can calculate what
it must be by applying Proposition~\ref{prop:lifting} twice to the above
square, first with $f$ along the left and then with $g$. Let the $L$-coalgebra
structures on $f$ and $g$ be given by $s \colon B \to Pf$ and $t \colon C \to
Pg$ respectively. Then direct calculation shows that the induced $L$-coalgebra
structure on $gf$ is given as follows. First form the composite
\begin{equation*}
    \xi = B \xrightarrow{s} Pf \xrightarrow{P(1,g)} P(gf)\ \text.
\end{equation*}
Now the induced $L$-coalgebra structure $C \to P(gf)$ is given by the composite
\begin{equation*}
    C \xrightarrow{t} Pg \xrightarrow{P(\xi, 1_C)} P\rr_{gf} \xrightarrow{\pi_{gf}} P(gf)\ \text.
\end{equation*}
That this is indeed an $L$-coalgebra structure, and a compatible one, is
straightforward calculation.
\end{proof}
Applying the preceding two results to the natural weak factorisation systems of
Proposition~\ref{prop:havewfs} and Corollary~\ref{prop:havewfs2}, we obtain:
\begin{proof}[Proof of Proposition~\ref{prop:sharingtheory}]
As anticipated, we define composition in $\mathscr S[F]$ and $\mathscr
S_\infty[F]$ by pushouts of the form~\eqref{eq:pushoutcomp}, using the
preceding two Propositions to induce the required coalgebra structure on the
composite left leg from the coalgebra structures on the constituents. We give
the identity map at $A$ by the cospan
\begin{equation*}\cd{
    A \ar[dr]_{1_A} & & A \ar[dl]^{1_A} \\ & A}
\end{equation*}
where the left leg is equipped with its unique possible coalgebra structure. We
must check that this composition is associative and unital. Because we have
passed to equivalence classes, we have this at the level of underlying cospans;
it remains to verify that the induced coalgebra structures on the composites
are likewise well-behaved. But this follows easily from the universal
properties ascribed to the constructions of
Propositions~\ref{prop:coalgpushout} and~\ref{prop:coalgcompose}.
\end{proof}

Using the results of Proposition~\ref{prop:equivalenceatg} we may now show that
the composition of abstract term graphs, when specialised to a signature
endofunctor on $\cat{Set}$, agrees with the composition described after
Definition~\ref{def:termgraphfrom}. Note that this in particular provides an
abstract reason why this latter composition should be associative, a fact which
would otherwise have required a direct calculation.

Let us remark that the categories $\mathscr S[F]$ and $\mathscr S_\infty[F]$
will not typically be locally small, even if $\E$ is so. The reason is
essentially that a term graph from $A$ to $B$ may choose to do an arbitrary
amount of irrelevant computation which is invisible from the perspective of the
output nodes in $B$. Thus in practice it may be convenient to consider a
suitable full small subcategory $\A \subset \E$, and to cut down from $\mathscr
S[F]$ to the subcategory whose objects are those lying in $\A$, and whose
morphisms $A \rightarrow X \leftarrow B$ are those cospans where $A$, $X$ and
$B$ all lie in $\A$ (where for this definition to work we must assume that $\A$
is closed under the appropriate pushouts in $\E$). Thus when $\E = \cat{Set}$
and $F$ is the endofunctor associated to a signature $\Sigma$, we may take $\A$
to be the category of finite cardinals and so obtain the category of
\emph{finite} term graphs; which in the terminology
of~\cite{Hasegawa1997Models}, is the classifying category of the pure sharing
theory over the signature $\Sigma$. Likewise, when $\E = \cat{Set}^\mathbb F$
and $F$ is an endofunctor of the kind considered in
Section~\ref{sec:generalisations}, a sensible choice for $\A$ would be the full
subcategory of $\cat{Set}^\mathbb F$ comprised of the finite coproducts of
representables.

We now show that the categories $\mathscr S[F]$ and $\mathscr S_\infty[F]$
defined above play well with the notion of interpretation described in
Section~\ref{sec:interpretation}. Consider the acyclic case first. Given any
$F$-algebra $c \colon FC \to C$, we obtain maps
\begin{equation}\label{eq:actionmorphisms}
    \mathscr S[F](A, B) \times \E(A, C) \to \E(B, C)
\end{equation}
as follows. Given a cospan $A \xrightarrow f X \xleftarrow g B$ in $\mathscr
S[F]$ and a map $h \colon A \to C$, we may apply Proposition~\ref{prop:lifting}
to obtain an extension $j \colon X \to C$; and composing this with $g$ yields
the required map $jg \colon B \to C$. The point is that this process is
well-behaved with respect to composition of term graphs.
\begin{Prop}
To any $F$-algebra $c \colon FC \to C$, we may associate a functor $\mathscr
S[F] \to \cat{Set}$ given on objects by $A \mapsto \E(A,C)$ and on morphisms
by~\eqref{eq:actionmorphisms}.
\end{Prop}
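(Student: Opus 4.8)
The plan is to verify directly that the assignment $\Phi$ of the statement is well defined on equivalence classes and respects identities and composition, reducing each point to the functoriality of the factorisation $P$ together with the defining identity $P(h,k).s = s'.k$ for a map of $L_F$-coalgebras (Definition~\ref{def:termgraph}). Throughout I fix the $R_F$-algebra structure $\bar c \colon P(C \to 1) \to C$ on the terminal map $C \to 1$ which corresponds, under the discussion following Proposition~\ref{prop:rfcharacterise}, to the given $F$-algebra $c$. By Proposition~\ref{prop:lifting} and~\eqref{eq:liftcomposite}, the value of $\Phi$ on a term graph $A \xrightarrow{f} X \xleftarrow{g} B$ with coalgebra structure $s_f \colon X \to Pf$ is the function sending $a \colon A \to C$ to $j.g$, where $j = \bar c . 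P(a, {!}) . s_f \colon X \to C$ is the canonical lift of $a$ along $f$ (here ${!}$ always denotes the unique map to $1$).

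First I would check well-definedness. If a second representative $A \xrightarrow{f'} X' \xleftarrow{g'} B$ is equivalent to the first via an isomorphism $\lambda \colon X \to X'$ which is a map of coalgebras and satisfies $\lambda f = f'$ and $\lambda g = g'$, then the two lifts $j$ and $j'$ of a given $a$ agree after $\lambda$: using $s_{f'}.\lambda = P(1_A, \lambda).s_f$ and the functoriality identity $P(a, {!}).P(1_A, \lambda) = P(a, {!})$, one computes $j'.\lambda = \bar c.P(a,{!}).s_{f'}.\lambda = \bar c.P(a,{!}).s_f = j$, whence $j.g = j'.\lambda.g = j'.g'$. Thus $\Phi$ is independent of the chosen representative. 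Preservation of identities is then immediate: for the identity term graph $A \xrightarrow{1_A} A \xleftarrow{1_A} A$ the lift $j$ of $a$ satisfies $j.1_A = a$ by Proposition~\ref{prop:lifting}, so $j = a$ and the induced function is the identity on $\E(A,C)$.

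The substance of the proof is functoriality under composition. Consider term graphs $u = (A \xrightarrow{f} X \xleftarrow{g} B)$ and $v = (B \xrightarrow{h} Y \xleftarrow{k} D)$, whose composite is formed as in~\eqref{eq:pushoutcomp}, giving the pushout $X \xrightarrow{p} Z \xleftarrow{q} Y$ of $g$ against $h$ and the composite cospan $A \xrightarrow{pf} Z \xleftarrow{qk} D$; recall that the coalgebra on $p$ arises from that on $h$ by Proposition~\ref{prop:coalgpushout}, making $(g,q) \colon h \to p$ a map of coalgebras, while the coalgebra on $pf$ is obtained from those on $f$ and $p$ by Proposition~\ref{prop:coalgcompose}. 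Fix $a \colon A \to C$. Writing $j_u$ for the lift of $a$ along $f$ and $j_v$ for the lift of $j_u.g$ along $h$, we have $\Phi(u)(a) = j_u.g$ and hence $\Phi(v)(\Phi(u)(a)) = j_v.k$. On the other side, by the compatibility clause of Proposition~\ref{prop:coalgcompose}, the single lift $j_{vu}$ of $a$ along $pf$ coincides with the two-stage lift: first lifting $a$ along $f$ to obtain $j_u$, then lifting $j_u$ along $p$ to obtain a map $m \colon Z \to C$. Thus $j_{vu} = m$ and $\Phi(v\circ u)(a) = m.q.k$, so it remains only to identify $m.q$ with $j_v$.

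This last identity is where the coalgebra-map structure of the pushout is used. Since $(g,q) \colon h \to p$ is a map of coalgebras we have $s_p.q = P(g,q).s_h$, and therefore, writing $m = \bar c.P(j_u, {!}).s_p$, the functoriality of $P$ gives $m.q = \bar c.P(j_u,{!}).P(g,q).s_h = \bar c.P(j_u.g, {!}).s_h = j_v$. Combining, $\Phi(v\circ u)(a) = j_{vu}.q.k = m.q.k = j_v.k = \Phi(v)(\Phi(u)(a))$, which is the required functoriality. I expect the main obstacle to be the bookkeeping in this final step: one must line up the \emph{single} lift through the composite coalgebra on $pf$ with the \emph{two-stage} lift (this is precisely the content of the compatibility clause of Proposition~\ref{prop:coalgcompose}), and then transport the second stage across the pushout using the coalgebra-map identity for $(g,q)$. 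Both steps are formal, but depend on tracking which universal properties and functoriality identities for $P$ are in play, the uniqueness coming in each case from the initiality of the relevant algebra.
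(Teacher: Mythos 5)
Your proof is correct and follows exactly the route the paper intends: the paper's own proof is the one-line remark that functoriality follows from the universal properties of the constructions in Propositions~\ref{prop:coalgpushout} and~\ref{prop:coalgcompose}, and you have filled in precisely those details --- the coalgebra-map identity $s_p.q = P(g,q).s_h$ for the pushout leg and the compatibility clause identifying the one-stage lift through $pf$ with the two-stage lift. The additional checks of well-definedness on equivalence classes and preservation of identities are ones the paper leaves implicit, and you handle them correctly.
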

\begin{proof}
We need only show functoriality; for which we apply the universal properties of
the two constructions given in Proposition~\ref{prop:coalgpushout}
and~\ref{prop:coalgcompose}.
\end{proof}
Transposing these results into the cyclic case we obtain:
\begin{Cor}
To any completely iterative $F$-algebra $c \colon FC \to C$, we may associate a
functor $\mathscr S_\infty[F] \to \cat{Set}$ given on objects by $A \mapsto
\E(A,C)$ and on morphisms by the cyclic analogue of~\eqref{eq:actionmorphisms}.
\end{Cor}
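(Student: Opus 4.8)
The plan is to observe that the proof of the preceding Proposition used no feature of the acyclic setting beyond the fact that $(L_F, R_F)$ is a natural weak factorisation system; so the identical argument will establish the Corollary once it is read against the \emph{cyclic} natural weak factorisation system $(\bar L_F, \bar R_F)$ furnished by Corollary~\ref{prop:havewfs2}. In particular, the three tools involved---the Lifting Proposition~\ref{prop:lifting} together with Propositions~\ref{prop:coalgpushout} and~\ref{prop:coalgcompose}---are all stated for an arbitrary natural weak factorisation system, and so apply to $(\bar L_F, \bar R_F)$ without change.

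First I would fix the data. As recorded in Section~\ref{sec:interpretation}, to give a completely iterative $F$-algebra structure on $C$ is precisely to give an $\bar R_F$-algebra structure on the unique map $C \to 1$---this being the dual of Proposition~\ref{prop:rfcharacterise} specialised to codomain $1$. That is exactly the hypothesis needed to invoke Proposition~\ref{prop:lifting} with the right-hand leg taken to be $C \to 1$.

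Next I would spell out the action. Given a cyclic term graph from $A$ to $B$, namely a cospan $A \xrightarrow{f} X \xleftarrow{g} B$ carrying an $\bar L_F$-coalgebra structure on $f$, together with a map $h \colon A \to C$, the cyclic analogue of~\eqref{eq:actionmorphisms} applies Proposition~\ref{prop:lifting} to the square with $f$ along its left edge and $C \to 1$ along its right, yielding a canonical lift $j \colon X \to C$ with $jf = h$, and then returns $jg \colon B \to C$. Since the canonical lift given by~\eqref{eq:liftcomposite} is built functorially from the coalgebra structure, it is stable under the coalgebra isomorphisms generating the equivalence relation on term graphs, so this assignment descends to $\mathscr S_\infty[F]$.

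It then remains only to check functoriality, and this proceeds exactly as in the acyclic case: unitality is immediate from the identity cospan used in the proof of Proposition~\ref{prop:sharingtheory}, while compatibility with the pushout composition~\eqref{eq:pushoutcomp} is extracted by feeding the universal characterisations of Propositions~\ref{prop:coalgpushout} and~\ref{prop:coalgcompose} into the composite~\eqref{eq:liftcomposite}. I do not expect any genuine obstacle here: the whole point of phrasing Section~\ref{sec:interpretation} and the composition results through natural weak factorisation systems was precisely to make the cyclic case a formal dual of the acyclic one. If any step warrants care, it is only the identification of completely iterative $F$-algebras with $\bar R_F$-algebras over $1$, and this is already in hand from Section~\ref{sec:interpretation}.
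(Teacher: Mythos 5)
Your proposal is correct and is essentially the paper's own argument: the paper's proof is simply ``by duality,'' and your elaboration---identifying completely iterative $F$-algebras with $\bar R_F$-algebras over $1$ and rerunning the acyclic proof against the natural weak factorisation system $(\bar L_F, \bar R_F)$ of Corollary~\ref{prop:havewfs2}---is precisely what that duality unpacks to.
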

\begin{proof}
By duality.
\end{proof}

Let us conclude by briefly considering the extra structure carried by the
category $\mathscr S[F]$ and $\mathscr S_\infty[F]$; this is very much in the
spirit of~\cite{Hasegawa1997Models} and one can envisage further development
along those lines.
\begin{Prop}
The category $\mathscr S[F]$ admits a symmetric monoidal structure and an
identity-on-objects strict symmetric monoidal embedding $\E^\op \to \mathscr
S[F]$, where $\E^\op$ is equipped with its cartesian monoidal structure.
\end{Prop}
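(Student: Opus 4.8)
The plan is to obtain the monoidal structure on $\mathscr S[F]$ by lifting the symmetric monoidal structure of the category $\cat{Cospan}(\E)$ of cospans in $\E$ — whose tensor is induced by the coproduct of $\E$ — along the forgetful functor $\mathscr S[F] \to \cat{Cospan}(\E)$ that discards coalgebra structure. On objects I would set $A \otimes B \defeq A + B$, with unit object $0$; this choice is forced, since for the embedding into $\mathscr S[F]$ to be identity-on-objects and strict monoidal the tensor must agree with the cartesian product of $\E^\op$, which is precisely the coproduct of $\E$, with unit the initial object $0$. On morphisms, given term graphs $A \xrightarrow{f} X \xleftarrow{g} B$ and $A' \xrightarrow{f'} X' \xleftarrow{g'} B'$ (with $L_F$-coalgebra structures on $f$ and $f'$), I would define their tensor to be the cospan $A + A' \xrightarrow{f + f'} X + X' \xleftarrow{g + g'} B + B'$, which is the tensor of the underlying cospans in $\cat{Cospan}(\E)$; the only real work is to equip its left leg $f + f'$ with a canonical coalgebra structure built from those on $f$ and $f'$.

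The key step is precisely this induction of a coalgebra structure on $f + f'$, and I would carry it out using the two results already in hand rather than any new closure property. Observe that $f + f'$ factors as
\begin{equation*}
    A + A' \xrightarrow{1_A + f'} A + X' \xrightarrow{f + 1_{X'}} X + X'\rlap{ ,}
\end{equation*}
and that $1_A + f'$ is the pushout of $f'$ along the coprojection $\mathsf{inl} \colon A' \to A + A'$, while $f + 1_{X'}$ is the pushout of $f$ along $\mathsf{inl} \colon A \to A + X'$. Hence by Proposition~\ref{prop:coalgpushout} each factor acquires a canonical coalgebra structure from that on $f'$ (respectively $f$), and by Proposition~\ref{prop:coalgcompose} their composite $f + f'$ acquires a compatible one. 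This is the coalgebra structure I would place on the tensor.

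To assemble these data into a symmetric monoidal category I would argue on two levels. At the level of underlying cospans, $\cat{Cospan}(\E)$ is symmetric monoidal under $+$ — with associators, unitors and braiding the images of the coherence isomorphisms for the coproduct of $\E$, and with the interchange law holding because coproducts of pushout squares are again pushout squares — so all the coherence identities already hold for the underlying cospans. It then remains to check that the induced coalgebra structures are coherent: that the tensor respects the chosen equivalence relation, that the coherence isomorphisms lift to maps of coalgebras, and that the bifunctoriality (interchange) of $\otimes$ against the pushout-composition of Proposition~\ref{prop:sharingtheory} holds at the coalgebra level. The main obstacle lies in this last point: both composites in the interchange law equip one and the same underlying cospan with a coalgebra structure, and I expect the cleanest route is to observe that each is a compatible structure in the sense of Proposition~\ref{prop:coalgcompose}, whence the uniqueness clauses of Propositions~\ref{prop:coalgpushout} and~\ref{prop:coalgcompose} force them to agree. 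The same uniqueness handles well-definedness on equivalence classes and the lifting of the coherence cells.

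Finally I would construct the embedding $J \colon \E^\op \to \mathscr S[F]$. It is the identity on objects, and it sends a morphism $A \to B$ of $\E^\op$, that is, a map $u \colon B \to A$ of $\E$, to the cospan $A \xrightarrow{1_A} A \xleftarrow{u} B$ carrying the unique coalgebra structure on the identity $1_A$. Functoriality is immediate from the computation of identities and pushout-composites of such cospans (the relevant pushouts are taken along an isomorphism, and the trivial coalgebra structures are preserved by uniqueness), and $J$ is faithful since any equivalence between cospans of this form must fix the left leg $1_A$ and hence be the identity, forcing equality of right legs. For strict symmetric monoidality, the tensor $J(u) \otimes J(u')$ has underlying cospan $A + A' \xrightarrow{1_{A+A'}} A + A' \xleftarrow{u + u'} B + B'$ with trivial induced coalgebra structure, which is exactly $J(u + u') = J(u \times_{\E^\op} u')$; together with $J(0) = 0$ and the observation that $J$ carries the coherence isomorphisms of $\E^\op$ to those of $\mathscr S[F]$, this shows that $J$ strictly preserves tensor, unit and symmetry.
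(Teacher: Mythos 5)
Your construction agrees with the paper's in all of its essentials: the unit is $0$, the tensor on objects is the coproduct, the tensor of two cospans is the coproduct of cospans with an induced coalgebra structure on the left leg, and the embedding $\E^\op \to \mathscr S[F]$ sends $u \colon B \to A$ to the cospan with left leg an identity and right leg $u$. The one place you genuinely diverge is in how the coalgebra structure on $f + f'$ is produced. The paper simply takes the coproduct of $(f,s)$ and $(f',s')$ in the category of $L_F$-coalgebras (it says ``algebras'', but coalgebras are meant); this exists because the forgetful functor from the coalgebras of a comonad creates colimits, and it is characterised as the unique structure making both coprojections coalgebra maps --- a universal property that makes functoriality, interchange and symmetry essentially immediate. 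You instead factor $f + f'$ as $(f + 1_{X'})\circ(1_A + f')$ and invoke Propositions~\ref{prop:coalgpushout} and~\ref{prop:coalgcompose}; this is a perfectly valid alternative and has the merit of reusing machinery already established for composition in $\mathscr S[F]$. The cost is a small extra obligation you gloss over: your structure a priori depends on the chosen factorisation (one could equally use $(1_X + f')\circ(f + 1_{A'})$), and the uniqueness clause of Proposition~\ref{prop:coalgcompose} is relative to a \emph{fixed} factorisation, so it does not by itself force the two candidates to agree. The clean way to discharge this --- and the interchange law, which you rightly identify as the main coherence issue --- is to check that each candidate makes the coprojections into coalgebra maps and hence coincides with the coproduct coalgebra. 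Given that the paper's own proof consists of little more than the definition of the data, your level of detail is if anything more honest about where the work lies.
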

\begin{proof}
The unit of the monoidal structure on $\mathscr S[F]$ is the object $0$ of
$\E$. The tensor product is given on objects by $A \otimes A' = A + A'$ and on
morphisms by $(A \rightarrow X \leftarrow B) \otimes (A' \rightarrow X'
\leftarrow B') = (A + A' \rightarrow X + X' \leftarrow B + B')$; the coalgebra
structure on the left leg of this tensor product being the coproduct in the
category of $L_F$-algebras. The embedding $\E^\op \to \mathscr S[F]$ is the
identity on objects, and on morphisms sends $f \colon A \to B$ to the term
graph
\begin{equation*}\cd{
    B \ar[dr]_{1_B} & & A \ar[dl]^{f} \\ & B}
\end{equation*}
from $B$ to $A$, where $1_B$ is seen as equipped with its unique
$L_F$-coalgebra structure.
\end{proof}


\bibliographystyle{acm}

\bibliography{bibdata}

\end{document}